\documentclass[journal]{IEEEtran}

\usepackage{cite}
\usepackage{amsthm}
\usepackage{amsmath,amssymb,amsfonts}
\usepackage{algorithmic}
\usepackage{graphicx}
\usepackage{textcomp}
\usepackage{url}
\usepackage{amsthm}
\usepackage{multirow}
\usepackage{multicol}

\usepackage[table]{xcolor}

\usepackage{algorithm}
\usepackage{algorithmic}

\def\BibTeX{{\rm B\kern-.05em{\sc i\kern-.025em b}\kern-.08em
    T\kern-.1667em\lower.7ex\hbox{E}\kern-.125emX}}

\newtheorem{proposition}{Proposition}

\newtheorem{definition}{Definition}

\begin{document}

\title{Joint Auction-Coalition Formation Framework for Communication-Efficient Federated Learning in UAV-Enabled Internet of Vehicles}

\author{
Jer Shyuan Ng\thanks{JS.~Ng and WYB.~Lim are with Alibaba Group and Alibaba-NTU Joint Research Institute, Nanyang Technological University, Singapore. }, 
Wei Yang Bryan Lim, 
Hong-Ning Dai\thanks{H.~Dai is with Faculty of Information Technology, Macau University of Science and Technology, Macau SAR.},
Zehui Xiong\thanks{Z.~Xiong is with Alibaba-NTU Joint Research Institute, and also with School of Computer Science and Engineering, Nanyang Technological University, Singapore. }, 
Jianqiang Huang,\thanks{JQ.~Huang and XS.~Hua are with Alibaba Group. }
Dusit~Niyato,\thanks{D.~Niyato is with School of Computer Science and Engineering, Nanyang Technological University, Singapore. }~\textit{IEEE~Fellow}, 
Xian-Sheng~Hua,~\textit{IEEE Fellow}, 
Cyril Leung,\thanks{C. Leung is with The University of British Columbia and Joint NTU-UBC Research Centre of Excellence in Active Living for the Elderly (LILY).}
Chunyan Miao \thanks{C.~Miao is with Joint NTU-UBC Research Centre of Excellence in Active Living for the Elderly (LILY) and School of Computer Science and Engineering, Nanyang Technological University, Singapore.}
}


\maketitle

\begin{abstract}

Due to the advanced capabilities of the Internet of Vehicles (IoV) components such as vehicles, Roadside Units (RSUs) and smart devices as well as the increasing amount of data generated, Federated Learning (FL) becomes a promising tool given that it enables privacy-preserving machine learning that can be implemented in the IoV. However, the performance of the FL suffers from the failure of communication links and missing nodes, especially when continuous exchanges of model parameters are required. Therefore, we propose the use of Unmanned Aerial Vehicles (UAVs) as wireless relays to facilitate the communications between the IoV components and the FL server and thus improving the accuracy of the FL. However, a single UAV may not have sufficient resources to provide services for all iterations of the FL process. In this paper, we present a joint auction-coalition formation framework to solve the allocation of UAV coalitions to groups of IoV components. Specifically, the coalition formation game is formulated to maximize the sum of individual profits of the UAVs. The joint auction-coalition formation algorithm is proposed to achieve a stable partition of UAV coalitions in which an auction scheme is applied to solve the allocation of UAV coalitions. The auction scheme is designed to take into account the preferences of IoV components over heterogeneous UAVs. The simulation results show that the grand coalition, where all UAVs join a single coalition, is not always stable due to the profit-maximizing behavior of the UAVs. In addition, we show that as the cooperation cost of the UAVs increases, the UAVs prefer to support the IoV components independently and not to form any coalition.

\end{abstract}

\begin{IEEEkeywords}
Federated Learning, Unmanned Aerial Vehicles, Coalition, Auction, Internet of Vehicles
\end{IEEEkeywords}

\section{Introduction}
It is expected by 2023, the Internet of Things (IoT) will have a market size of \$1.1 trillion \cite{IDC}. With an increasing number of vehicles being connected to the IoT, instead of just providing information to the drivers and uploading the sensor data to the Internet, from which other users can access the data, the vehicles are connected to each other in a network such that they are able to exchange their sensor data among each other for the optimization of a well-defined utility function.
As a result, the deep integration between IoT and vehicular networks has led to the evolution of the Internet of Vehicles (IoV) \cite{zhuang2020sdn} from the traditional Vehicular Ad-Hoc Networks (VANETs).

IoV is an open and integrated network with communication, computation, intelligence, storage and learning capabilities. Different from VANETs of which the performance is constrained by the number of vehicles connected to the networks and the mobility of the vehicles, the emphasis of the IoV network on information interaction among users, vehicles and Roadside Units (RSUs) has enabled many transportation-related applications that improve the efficiency of the transportation systems, upgrade the service level of the cities and provide comfort and higher satisfaction to the drivers. The mobile crowd sensing paradigm can make use of the sensor-equipped vehicles, the RSUs as well as the mobile devices in the IoV network to produce useful traffic data that is transmitted to the cloud for analysis. The collection and the analysis of the crowd-sourced data allow us to gain meaningful insights for the provision of efficient traffic management \cite{zheng2019deep}, route planning \cite{route} as well as to promote safe driving \cite{safe}.

Given the enormous amounts of dynamic data generated in the IoV network, Artificial Intelligence technologies are leveraged to build data-driven models. Specifically, many machine learning algorithms are developed to process and analyze data in order to make smart traffic predictions and provide intelligent route recommendations. To build the data-driven models, the traditional machine learning approaches require the data to be transmitted to the central data server for training. As the number of components connected to the IoV network increases, the amount of data generated in the IoV network increases exponentially. The central data server does not have sufficient storage and computing resources to store and process such massive amount of data, limiting the performance of the trained models. Besides that, the transmission of raw data over the communication channels incurs high communication cost and risks the privacy of the data as it may be eavesdropped while being transmitted.

In face of these challenges, Federated Learning (FL) \cite{federated} has gained great interest of the research and industrial communities. Instead of training the machine learning model at the central data server, FL allows the training of the models to take place directly on the devices by leveraging on the growing computational and storage capabilities of the IoT devices. 

The IoV service provider runs an FL server, i.e., model owner, which employs multiple groups of parking vehicles or RSUs, i.e., workers, which collect information from their immediate surroundings. In each training iteration, the workers receive the model parameters from the model owner for model training on their respective local datasets. The updated local model parameters from all involved workers are then uploaded to the FL server which performs model aggregation. The model owner then transmits the updated model parameters to the workers for another iteration of local model training. A number of iterations is performed until a desired accuracy is achieved. Efficient data-driven models are built while preserving the privacy of the participating devices since only the model parameters, rather than the raw data, are exchanged.

However, there are several challenges pertaining to FL specifically that need to be addressed for large scale implementation of efficient FL. In particular, communication inefficiency is a key bottleneck in the FL process \cite{konen2016federated}. Since the federated network is connected to millions of devices, it is estimated that the time taken to update the local parameters over the communication channels takes longer than to compute the local parameters themselves. The data communication between the workers and the model owner may fail due to limited communication capacity and link failure \cite{Hu2013}, \cite{Abdullah2014}, which leads to the resulting node failure and missing node information. More than 30\% of node failure contributes to a significant increase of the average end-to-end delay and concurrent communication time \cite{li2018}. This means that the number of workers participating in the model training is reduced due to communication failure, thus causing a lower accuracy of the FL model. 

Therefore, we propose the use of Unmanned Aerial Vehicles (UAVs) to facilitate the network in achieving higher communication efficiency in the FL network. 

The vehicles and the RSUs may not have sufficient communication bandwidth and energy capacity to transmit the local model parameters directly to the FL server which is located far away. As a result, the FL server does not receive the local model parameters from these failing nodes, which cause a degraded performance in the trained FL model. The UAVs are increasingly used as relays between the ground base terminals and the network base station \cite{wirelessrelays} to improve network connectivity and extend coverage areas \cite{connect1}. Since UAVs can be quickly deployed to designated areas whenever they are needed, the UAVs are nearer to the vehicles and the RSUs. Instead of transmitting the local model parameters to the FL server that is further away, the vehicles and the RSUs only need to transmit them to the UAVs that are located nearer to them, of which the parameters are relayed to the FL server. Hence, as compared to the direct communication between the IoV components and the FL server, the UAVs help to improve this communication by reducing node failures which may be due to limited resources and energy capacity of the IoV components.

In this paper, we consider that the model owner is interested in facilitating FL among IoV components (Fig. \ref{system}), e.g., for traffic prediction. For efficient FL training, the model owner announces the FL task and the number of training iterations. The interested workers upload their sampling rate to the model owner. The model owner then selects its workers accordingly based on the importance of each worker, which is computed based on the uploaded sampling rates. Since the workers may not have sufficient resources, i.e., transmission power and communication bandwidth for efficient FL, the workers can utilize UAVs which have the required resources to facilitate the FL tasks. Firstly, the workers will conduct model training with their local datasets. Then, instead of sending the local model parameters to the remote FL server directly, the workers upload their local model parameters to the UAVs, which in turn aggregate the accumulated model parameters for relay to the model owner. Thereafter, the model owner aggregates the model parameters to derive the global FL model. The model owner will then announce the global FL model parameters to the workers for another round of FL training. The process continues until the number of iterations announced by the model owner is reached.

There are two advantages to this proposed approach. Firstly, the privacy of the workers is preserved since only local parameters of the workers are exchanged with the UAVs and the data of the workers remains locally at the workers' site. Secondly, with the help of UAVs, communication efficiency is increased, resulting in reduction of link and node failure. 

\begin{figure}
\includegraphics[width=\linewidth]{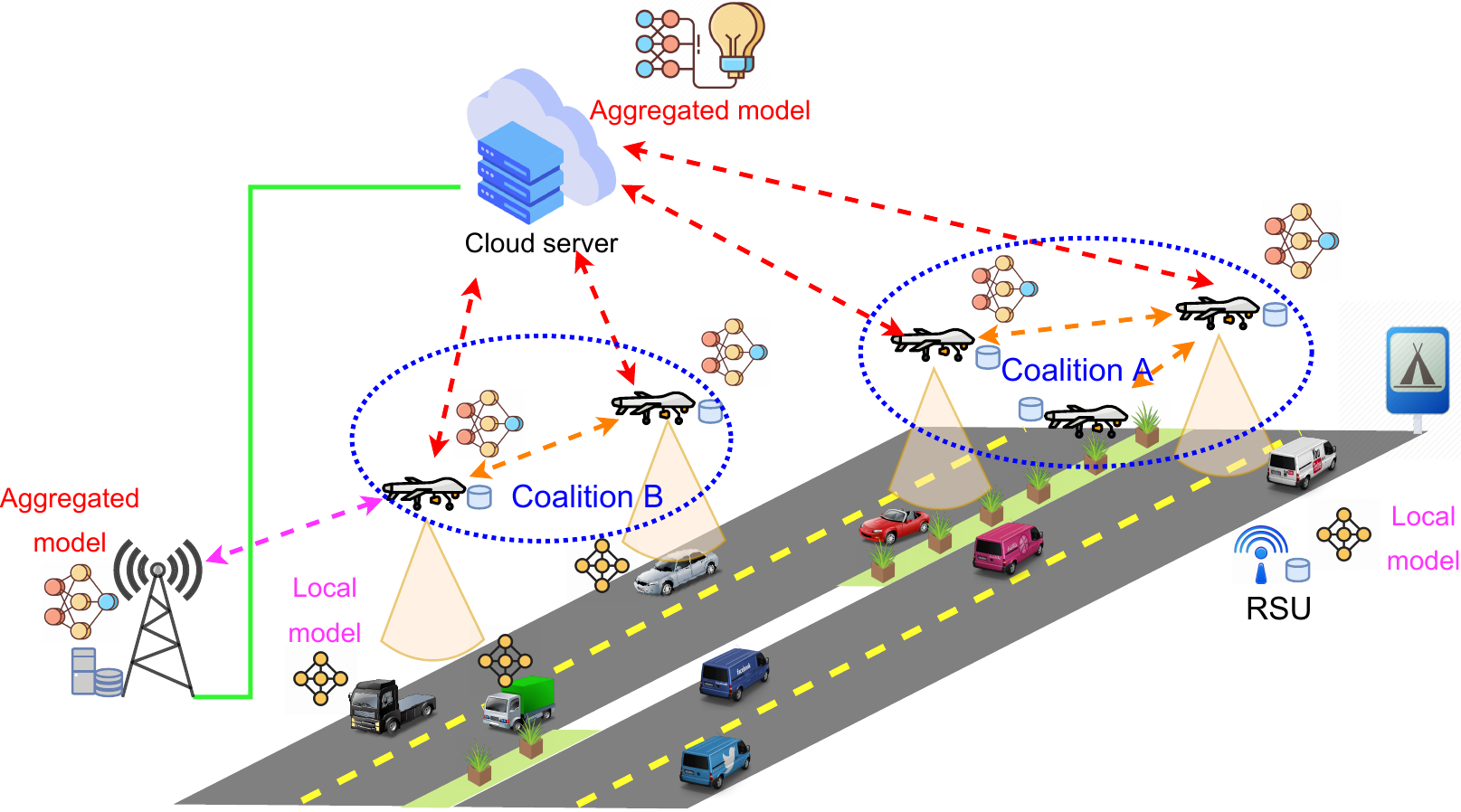}
\caption{System model consists of the cloud server (FL model owner), the vehicles and RSUs (selected FL workers) and the UAVs.}
\label{system}
\end{figure}

However, a single UAV may not have sufficient energy to last for the entire training duration and other resources for the FL training process. As such, the UAVs may cooperate with other UAVs to ensure that the edge aggregation and communications with the model owner are provided throughout the entire FL training process. To model the cooperation among the UAVs, we propose a coalition formation approach among cooperating UAVs. In order to incentivize UAVs to participate in the FL training, the UAVs are rewarded for their contributions if they complete the FL training process. The UAVs receive different payoffs for supporting FL training process for different cells of workers as the importance of the workers varies. In addition, the workers have preference for different UAVs. In order to elicit the valuations of cells of workers for each coalition of profit-maximizing UAVs, we propose an auction scheme. Each cell of workers submit their bids to each coalition of UAVs. Collectively, the profit-maximizing UAVs evaluate all possible coalitional structure formation and decide on the optimal allocation of UAVs. Since the UAVs are profit-maximizing, it is possible that there are two or more UAV coalitions which want to be allocated to the same cell of workers. In such a situation, based on the bids of the cells of workers, which also represent preferences of the cells of workers, the more preferred UAV coalition gets the allocation. 

The main goal of this work is to develop a joint auction-coalition formation framework of UAVs towards enabling efficient FL for IoV networks. The integration of coalition formations of UAVs and optimal bidding of cells of workers makes the proposed framework suitable for the FL training process as it ensures that the UAVs have sufficient resources to complete the FL tasks and at the same time, optimally allocates the profit-maximizing UAV coalitions to the groups of IoV components.  Our key contributions include:
\begin{enumerate}
\item We propose the introduction of UAVs as wireless relays in FL network where communication efficiency between the model owner and the workers is improved. 
\item We propose a joint auction-coalition formation framework where the problem of UAV coalitions allocation to the groups of IoVs components is formulated as a coalition formation game.
\item We propose a joint auction-coalition formation algorithm where a stable coalitional structure is achieved in which the allocation of profit-maximizing UAV coalitions to the groups of IoV components is solved by an auction scheme.

\end{enumerate}

The remainder of the paper is organized as follows: In Section~\ref{sec:related}, we review the related works. In Section~\ref{sec:sys}, we present the system model and the problem formulation. Our proposed approach of coalition formation of UAVs and auction design are discussed in Section~\ref{sec:coalitions} and Section~\ref{sec:auction}, respectively. Section~\ref{sec:sim} discusses the simulation results and analysis. Section~\ref{sec:conc} concludes the paper.

\section{Related Work}
\label{sec:related}

A number of recent studies investigate on the effective deployment of FL algorithms over the wireless networks. Several fundamental issues such as the optimal allocation of resources \cite{chen2019joint}, \cite{yang2019energy}, the development of efficient algorithms \cite{adaptive2019} and the design of incentive mechanisms \cite{hierarchical}, \cite{lim2020incentive} are current active research areas in FL. One of the main challenges in FL is communication inefficiency. In particular, the synchronous FL protocol implies that each training iteration only takes place as quickly as the slowest device, i.e., the straggler's effect \cite{lim2019federated}. As such, the dropped or straggling devices can lower the efficiency of FL training. The reduction of communication cost in the federated network setting can be achieved by reducing the total number of communication iterations or by reducing the size of transmitted data in each iteration \cite{li2019federated}. In order to minimize convergence time while optimizing performance, the study of \cite{chen2020convergence} looks into probabilistic user selection scheme and artificial neural networks. Given the faster convergence time, the end devices have greater capabilities to complete the FL training process. The study of \cite{konen2016federated} proposes to use structured and sketched updates to reduce communication cost in dealing with a large number of users over unreliable network connections. 


Apart from the studies that focus on the improvement of transmission efficiency in the wireless networks \cite{zhang2019artificial}, \cite{zhang2019deep}, due to the inherent attributes of UAVs such as flexibility, mobility and adaptive altitude,  there are an increasing number of UAV applications in wireless networks \cite{tutorial}. The growing capabilities of the UAVs have demonstrated their potential to be deployed in numerous applications, e.g., environmental sensing \cite{airsensing}, mobile edge computing \cite{mobileedge} and traffic surveillance \cite{puri2005survey}. In particular, UAVs are also increasingly being considered in IoV-related applications for more efficient Intelligent Transportation Systems (ITS) in the development of smart cities. The study of \cite{carcount} provides a solution in counting the number of cars based on images provided by the UAVs. The aerial images taken by the UAVs are used to track vehicles \cite{perez2014ground}. Moreover, UAVs are also used as aerial mobile base station for cellular traffic offloading \cite{cellular}. Several studies such as \cite{optimal2018} have been conducted on the optimal placement of the UAVs. 

However, to the best of our knowledge, few studies consider using UAVs for efficient FL. The involvement of UAVs as wireless relays between the IoV components and the FL server can greatly improve the communication efficiency of the FL training process by improving the connectivity and increasing the coverage of the IoV components. 

Due to the limitation of a single UAV in carrying out complex tasks, coalition formation games for tasks allocation have been investigated. The cooperation of multiple UAVs to maximize the coverage utility of the UAV network is considered in \cite{deployment2018}, which is useful in crowdsensing tasks to gather more data covering larger areas for more accurate analysis. A leader-follower approach \cite{leader2018} is also widely used by the UAVs to form multiple coalitions in order to complete designated tasks with minimal resources. The study in \cite{priority} considers coalition formations for tasks allocation taking into account tasks priority and requiring the UAVs to arrive simultaneously for task completion. However, these coalition formations approaches do not jointly consider the importance of the tasks that the UAVs perform and the optimal allocation of the UAVs to complete the tasks, subject to the resource constraints of the UAVs. Besides that, the preferences of the workers for heterogeneous UAVs are not considered.

To that end, in order to elicit the preferences of workers for the heterogeneous UAVs, we adopt an auction-based approach in our coalition formation design for task allocation. The use of sequential auction mechanism is proposed \cite{sequential2007} by considering the UAVs with limited communication range. However, it only considers one UAV in serving a task, whereas a single UAV may not be able to support a cell of workers for the entire FL training process in our model. The study of \cite{auction2016} explores the leader-follower approach where the leader UAVs auction for other UAVs to form coalitions that are adaptive to the changes in task workload and capabilities of the UAVs. The work of \cite{auctionprice} proposes a multi-layered cost computation method to calculate the price of the bid. While the study of auction design is well-explored in tasks allocation, where the bid winner is often determined by the bid with lowest cost or the bid with the highest utility, the valuation of a coalitional structure using an auction approach is often not considered.

As such, this inspires us to propose a joint auction-coalition formation framework (Fig. \ref{framework}) to determine the allocation of UAV coalitions to groups of IoV components for efficient FL over IoV networks.

\section{System Model}
\label{sec:sys}

\begin{figure*}
\includegraphics[width=\textwidth]{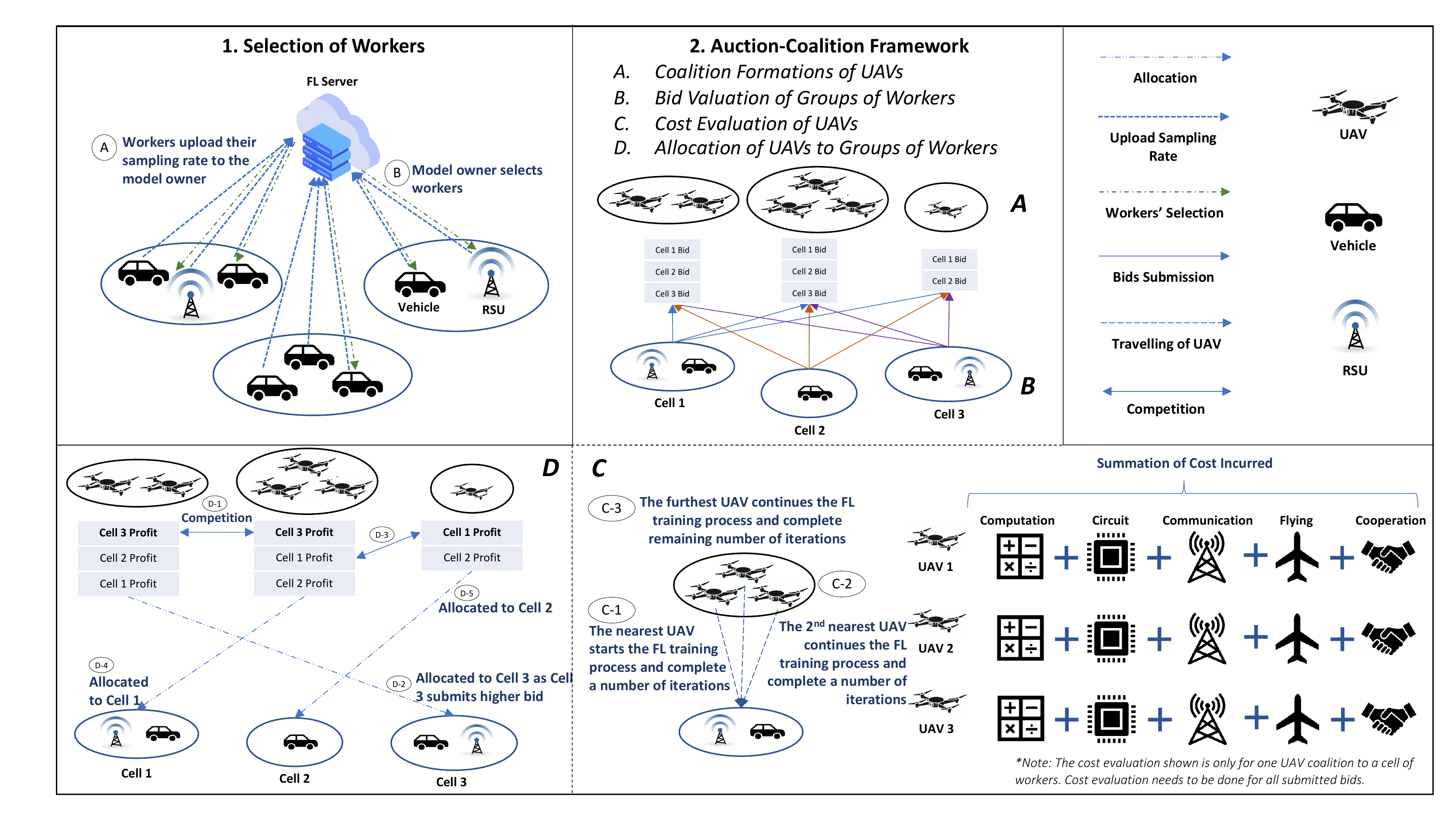}
\caption{Illustration of the Joint Auction-Coalition Formation Framework.}
\label{framework}
\end{figure*}

We consider a UAV-assisted FL network over the IoV paradigm. The system model is comprised of an FL model owner which announces crowdsensing tasks to a group of selected FL workers. The FL workers which complete the crowdsensing tasks are responsible for labelling the data and training the model on their own local datasets. In order to perform data labelling for model training, the FL workers can leverage the knowledge, i.e., labels from a source-domain party \cite{liu2018secure}. Thereafter, the local model parameters will be uploaded to the FL model owner for model aggregation. Since the communication efficiency between the model owner and the cell of workers is low due to link failure and missing nodes \cite{Hu2013}, \cite{Abdullah2014}, UAVs are introduced to facilitate the network to improve the communication efficiency of the FL model. 

We consider a coverage area that is divided into $I$ cells, represented by a set $\mathcal{I}=\{1,\ldots, i,\ldots, I\}$. Each cell $i$ contains $W_i$ workers, where the total number of workers in each cell $i$ is different and it is represented by $\mathcal{W}_i=\{1,\ldots, w_i,\ldots, W_i\}$. The term $d_{iw} $ denotes the $w^{\mathrm{th}}$ IoV component in cell $i$. Each worker has different levels of importance, denoted by $\sigma_{iw}$, e.g., based on the sampling rate and the quality of data collected. The workers with low sampling rate and low quality are of low importance as there is insufficient data for the FL training process. The workers upload their sampling rates to the model owner, which in turn computes the importance of each worker. Since the design for incentive mechanisms that guarantee the truthfulness of the workers' sampling rates is not the focus of this paper, existing incentive mechanisms such as the Vickrey-Clarke-Groves mechanism \cite{kantar2010incentive} can be implemented to ensure that the workers upload their true sampling rates. Based on the importance level of each worker, the model owner selects the workers to participate in the FL training process. After selecting the workers, the model owner announces an FL model that will be trained over $\mu$ iterations. The optimal value of $\mu$ is determined such that the global loss function is minimized given the resource-constrained workers \cite{adaptive2019}. Upon completion of the FL task, the model owner will make a payment to the participating workers.

In the network, there are $M$ geographically distributed UAVs, represented by a set $\mathcal{M}=\{1,\ldots, m,\ldots, M\}$ to facilitate the FL training process. The UAVs only collect the local model parameters from the individual workers after their model training and then perform edge aggregation before transmitting the aggregated local model parameters to the FL model owner. This ensures that the data belonging to the workers is not exchanged with any third parties, consequently preserving the data privacy. To improve communication efficiency, the UAVs need to ensure that they have sufficient energy to stay in the air throughout the entire FL process. In order to incentivise the UAVs to complete the FL training process, the workers will make a payment to the UAVs. In fact, the UAVs always maximize their payoffs. However, the individual UAVs may lack the energy capacities to stay in the air throughout the entire FL training process, and they cannot support certain cells of workers. As a result, the UAVs can consider to form coalitions. Note that each coalition of UAVs can only choose to facilitate the FL training process in one of the cells of workers. In order to elicit the willingness of payment of each cell of workers, an auction scheme is also explored to investigate the optimal allocation that maximizes the profits of the UAVs. To perform the auction process, the UAVs and the workers keep each other informed of their locations.

In our system model, we consider the use of UAVs to facilitate the FL training process by improving the communication efficiency between the FL workers and the FL model owner. At the same time, the UAVs can be used to support other types of users such as the typical mobile users, which generate background traffic for the UAVs. An extension to our system model to include different types of users, is straightforward and can be considered in the future work. Note that with greater computational and communication capabilities to support more IoV components for the training of FL models with higher accuracy level, the operational cost of involving the UAVs in the model training may be higher than that without using the UAVs.

\begin{table}[t]
\caption{Table of Commonly Used Notations.} 
\centering
\scriptsize
\begin{tabular}{|p{1.5cm} |p{6.4cm}| }

\hline
\textbf{Parameter}& \textbf{Description}\\ 
 \hline 

\rowcolor{lightgray}
\multicolumn{2}{|c|}{\emph{\textbf{Worker Parameters}}}\\[0.5ex]
\hline
$W_i$ & Total Number of Workers\\ \hline
$s_{iw}$ & Sampling Rate of Worker\\ \hline
$\sigma_{iw}$ & Importance of Worker\\ \hline

\rowcolor{lightgray}
\multicolumn{2}{|c|}{\emph{\textbf{Cell Parameters}}}\\[0.5ex]
\hline
$I$ & Total Number of Cells \\ \hline
$\sigma_i$ & Importance of Cell\\ \hline 
$q_i$ & Price Coefficient of Importance of Cell\\ \hline

\rowcolor{lightgray}
\multicolumn{2}{|c|}{\emph{\textbf{UAVs Parameters}}}\\[0.5ex]
\hline
$M$& Total Number of UAVs \\ \hline
$d^m_{i}$ & Distance between UAV and cell\\ [0.5ex] \hline
$E_m$ & Energy Capacity \\  \hline
$K_m$ & Cooperation Cost\\ \hline
$\delta_m$ & Price Coefficient of Energy Cost of UAV\\ \hline
$n_{im}$ & Maximum Number of Iterations of UAV \\ \hline

\rowcolor{lightgray}
\multicolumn{2}{|c|}{\emph{\textbf{Coalitions Parameters}}}\\[0.5ex]
\hline
$S_l$ & Coalition of UAVs\\ \hline
$\theta_1$, $\theta_2$ & Weight Parameters\\ \hline
$\alpha_i(S_l)$ & True Valuation \\ \hline
$x_i(S_l)$ & Profit of Coalition\\ \hline
$p_i(S_l)$ & Cost of Coalition \\ \hline
$c_i(S_l)$ & Payment Price of Bid Winner \\ \hline
$\gamma(\Pi)$ &Profit of Partition\\ \hline

\rowcolor{lightgray}
\multicolumn{2}{|c|}{\emph{\textbf{Model Owner Parameters}}}\\[0.5ex]
\hline
$\mu$ & Number of FL Iterations \\ \hline
\end{tabular}
\end{table}

\subsection{Worker Selection}
Each worker has different sensing capabilities and has different types of collected data. As such, the quality of data collected by each worker differs from one another. For example, a higher quality sensor provides data that more accurately reflects the changes in the road traffic environment. Reference \cite{Nguyen2013} considers the quality of data collected as a function of the sampling rate. The quality of data implies the importance of the workers where a worker with higher quality of data has higher importance, such that the model accuracy can be improved. Following a similar assumption as in \cite{Nguyen2013}, the importance of worker $w$ in cell $i$ is denoted by
\begin{equation}
\label{sampling}
\sigma_{iw}(s_{iw})=\frac{\log_{10}(s_{iw}+1)}{\log_{10}20},
\end{equation}
where $s_{iw}$ is the sampling rate of a specific worker denoted by $d_{iw}$. This concave function implies that the increase in sampling rate at lower values has a greater impact on the quality of worker as compared to the same increase in sampling rate at higher values. The log term in Equation (\ref{sampling}) captures the effect of diminishing marginal rate of return on the sampling rate as the data may be replicated or similar and thus, they are redundant \cite{xie2017anomaly}.

The model owner selects workers which has an importance level greater than $\zeta$, where $\zeta$ is the minimum threshold level of importance for the workers to be selected by the model owner. The value of $\zeta$ is determined by the model owner and is IoV application-specific. For example, $\zeta$ can be set larger for road safety applications such as navigation systems.

The importance of cell $i$ can be computed as a summation of individual importance of each selected worker in cell $i$. It can be expressed as:
\begin{equation}
\sigma_{i}=\sum_{w=1}^{W_i} \sigma_{iw},
\end{equation} 
where $\sigma_{iw}$ denotes the importance of worker $d_{iw}$.
 
\subsection{UAV Energy Model}

To execute the FL model training, each UAV incurs travelling cost to the workers' locations, communication and computational costs. It also incurs coalition formation cost that includes cooperation and coordination cost if a UAV decides to join a coalition. The energy spent by the UAV in executing the task includes the following components:
\begin{itemize}
\item flying from its original position to the destination and flying back to its original position,
\item receiving local model parameters from the workers,
\item staying at the destination to aggregate the local parameters from the workers,
\item transmitting aggregated local model parameters to the model owner, 
\item receiving global model parameters from the model owner,
\item transmitting global model parameters to the workers,
\item receiving aggregated local model parameters from the workers, and
\item communicating with other UAVs to form a coalition.
\end{itemize}

\subsubsection{\textbf{Flying Energy}}
Following a similar assumption as in \cite{YOO2016140}, the energy required by the UAV $m$ to fly from its depot to cell $i$ can be denoted by $e^f_{i,m}$, where
\begin{equation}
e^f_{i,m}=\left(\frac{1}{2}g^f_i+\frac{1}{2}g^f_m\right )e^f\frac{d^m_{i}}{v}.
\end{equation}

Without loss of generality, the UAVs are assumed to fly at a constant velocity $v$. The energy required to make a flight per unit time is denoted as $e^f$. We leverage $g^f_i$ and $g^f_m$ to denote the flying energy weight in cell $i$ and the depot of UAV $m$, respectively. The flying energy weights depend on the terrain of the cells and the depot of the UAV while $d^m_{i}$ reflects the distance between UAV $m$ and cell $i$. Note that the flying energy consumption, $e^f_{i,m}$ depends on both the locations of the UAV and the cell.

\subsubsection{\textbf{Computational Energy}}
Similar to \cite{latency2018}, the energy needed by UAV $m$ to complete the computation task per iteration, which is to aggregate the local model parameters of all workers, is represented as follows:
\begin{equation}
e^c_{m}=\kappa a_{m}f^2_{m},
\end{equation}
where $a_{m}$ is the total number of CPU cycles needed by UAV $m$ to complete the aggregation of local model parameters, $f_{m}$ is the computation capability of UAV $m$ which depends on the clock frequency of central processing unit (CPU) of UAV $m$, $\kappa$ is the coefficient of the value that depends on the circuit architecture of CPUs as in \cite{miettinen2010energy}.

\subsubsection{\textbf{Communication Energy}}

The UAV requires energy to transmit data packets to the model owner. As in \cite{chen2019joint}, the achievable uplink transmission rate by UAV $m$ can be represented by 
\begin{equation}
r^{UT}_m=\sum_{n=1}^{R}r_{mn}b_{m}\log_{2}(1+\frac{p_{mt}h_{m}}{I_{n}+b_{m}N_{0}}),
\end{equation}
where $R$ is the total number of resource blocks that the model owner allocates to the UAVs. Each UAV $m$ is only allowed to occupy one resource block. The term $r_{mn}$ is a resource block vector, where $r_{mn}=\{r_{m1},\ldots,r_{mn},\ldots,r_{mR}\}$. In particular, $r_{mn}=1$ indicates that resource block $n$ is allocated to UAV $m$, whereas $r_{mn}=0$ represents that the resource is not allocated to UAV $m$. The term $b_m$ is the bandwidth allocated to UAV $m$, $p_{mt}$ is the transmission power of UAV $m$, $h_m$ is the channel gain between UAV $m$ and the model owner, $I_n$ is the interference caused by other UAVs that uses the same resource block and $N_0$ is the power spectral density of the Gaussian noise. The issues of optimal resource block allocation, bandwidth allocation and transmission power allocation have been extensively studied in the literature \cite{greenallocation, qos}. Since it is not the focus of this paper, existing highly efficient schemes such as an opportunistic and efficient resource block allocation (OEA) algorithm \cite{greenallocation} can be applied to determine the allocation of resource blocks, bandwidth and transmission powers of the UAVs. The energy required by UAV $m$ to transmit model parameters in any cell $i$ to the model owner over wireless channels per iteration is

\begin{equation}
e^{UT}_{m}=p_{mt}\frac{z^c}{r^{UT}_m},
\end{equation}
where $z^c$ is the data size of cell-aggregated local model parameters. To upload the data packet of size $z^c$ over time $t$, the data size that UAV $m$ can handle must be larger than the input data size $z^c$.

The UAV also requires energy to receive the global FL parameters from the model owner. Similar to \cite{chen2019joint}, the achievable downlink transmission rate by the FL server to UAV $m$ is
\begin{equation}
r^{DR}_m=b_{BS}\log_{2}(1+\frac{h_{m}p_{BS}}{N_{0}b_{BS}}).
\end{equation}

The term of $b_{BS}$ is the bandwidth that the model owner uses to broadcast information to the UAVs and $p_{BS}$ is the transmission power of the model owner. The energy required by UAV $m$ to receive data from the model owner over wireless channels per iteration is
\begin{equation}
e^{DR}_{m}=p_{mr}\frac{z_{BS}}{r^{DR}_m},
\end{equation}
where $p_{mr}$ is the receiving power of UAV $m$. The term $z_{BS}$ is the size of the global model parameters.

Similarly, energy is also needed by the UAVs to communicate with the workers in the cell. The achievable uplink transmission rate by worker $w$ in cell $i$ is
\begin{equation}
r^{UR}_{iw}=\sum_{n'=1}^{R'_i}r_{iwn}b_{iw}\log_{2}(1+\frac{p_{iw}h_{iw}}{I_{n'}+b_{iw}N_{0}}),
\end{equation}
where $R'_i$ is the total number of resource blocks allocated to worker $w$ in cell $i$, $r_{iwn'}$ is a resource block vector where $r_{iwn'}=\{r_{iw1},\ldots,r_{iwn'},\ldots,r_{iwR'}\}$. The bandwidth allocated to and transmission power of worker $w$ in cell $i$ are denoted by $b_{iw}$ and $p_{iw}$, respectively. The term $h_{iw}$ is the channel  gain between the UAV and the worker and $I_{n'}$ is the interference caused by other workers using the same resource block.

The energy required by UAV $m$ to receive local model parameters from all the workers in cell $i$ per iteration is
\begin{equation}
e^{UR}_{im}=p_{mr}\sum_{w=1}^{W}\frac{z^w}{r^{UR}_{iw}},
\end{equation}
where $z^{w}$ is the data size of the local model parameters of the worker.

The achievable downlink transmission rate by UAV $m$ to the worker $w$ in cell $i$ is
\begin{equation}
r^{DT}_{iw}=b_{m}\log_{2}(1+\frac{h_{iw}p_{mt}}{N_{0}b_{m}}).
\end{equation}

The energy required by UAV $m$ to transmit the global model parameters to all workers in cell $i$ per iteration is
\begin{equation}
e^{DT}_{im}=p_{mt}\sum_{w=1}^{W}\frac{z_{BS}}{r^{DT}_{iw}}.
\end{equation}

\subsubsection{\textbf{Hovering Energy}}
Hovering energy is the energy needed by the UAV to stay near the cell. The hovering energy of UAV~$m$ in any cell $i$ is denoted by $e^{h}_{m}$.

\subsubsection{\textbf{Circuit Energy}}
Circuit energy is the energy consumed by the on-board circuits such as computational chips, rotors and gyroscopes \cite{circuit}. The on-board circuit energy is important and needs to be taken into consideration as it affects the energy-efficiency of the network. The circuit energy of UAV~$m$ is denoted as $e^{t}_{m}$.

In the next section, we present the coalition formation game among the UAVs.

\section{Coalitions of UAVs}
\label{sec:coalitions}

In this section, we present coalitions of UAVs. In particular, each cell $i$ charges the model owner for the local model parameters from the cells of workers. The amount of payment that the model owner needs to pay to the workers in cell $i$ for their contribution to train the FL model depends on the importance of cell $i$, which is $\sigma_i$. In other words, the workers in cell $i$ earn a revenue for training the FL model. The payment that cell $i$ receives for participating in the FL training process is $\frac{q_{i}\sigma_{i}}{t^c}$, where $q_i$ is the price paid by the model owner for a unit of importance in cell $i$ and $t^c$ is the total time needed to complete the FL task. Note that that the model owner only makes payment to the workers upon completion of the FL task. Since the payment is dependent on the total time needed for the completion of the FL task, the workers have higher incentive to complete the FL task as fast as they can. Hence, the workers have higher incentive to bid for nearer UAV coalitions which is further explained in Section~\ref{sec:auction}.

To receive full payment from the model owner, the cells need to complete $\mu$ iterations as required by the model owner. As a result, the cells of resource-constrained workers require the UAVs to facilitate the training process so as to improve communication efficiency. Since some UAVs may not have sufficient energy resources to support the entire FL process independently, the UAVs may form coalitions to support the cells of workers in completing the FL training.

\subsection{Coalition Game Formulation}
We have the following definitions for coalition game formulation. 

\begin{definition}
A coalition of UAVs is denoted by $S_{l}\subseteq \mathcal{M}$, where $l$ is the index of the coalition \cite{coalition}.
\end{definition}

\begin{definition}
A partition or coalitional structure is a group of coalitions that spans all players in $\mathcal{M}$ and is represented as $\Pi=\{S_1,\ldots,S_l,\ldots,S_L\}$, where $S_{l}\cap S_{l'}=\emptyset$ for $l\neq l'$, $\bigcup_{l=1}^{L}S_{l}=\mathcal{M}$ and $L$ is the total number of coalitions in partition $\Pi$. 
\end{definition}

The total number of possible partitions for $M$ players is given by $D_M$, which is known as the Bellman number given as follows:
\begin{equation}
\label{bellman}
D_u=\sum_{j=0}^{u-1} {{u-1}\choose{j}} D_{j}, \; \text{for} \; u\geq1\; \text{and} \; D_0=1,
\end{equation}
$\mathcal{M}$ denotes the coalition of all players, which is also known as the grand coalition.
\begin{definition}
A partition $\Pi=\{S_1,\ldots,S_l,\ldots,S_L\}$  is a stable partition if no coalition $S_l$ has incentive to change the current partition $\Pi$ by joining another coalition $S_{l'}$, where $S_{l}\cap S_{l'}=\emptyset$ for $l\neq l'$, or splitting into smaller disjoint coalitions \cite{coalition}.
\end{definition}

The term of $K_m(S_l)$ is the cooperation cost incurred by UAV $m$ to form a coalition $S_l$. 
It includes the cost of communicating with other UAVs in the same coalition. The cooperation cost is only incurred when there are two or more UAVs in a coalition. Otherwise, there is no cooperation cost incurred by the UAV. Specifically, $K_m$ is defined as:
\begin{equation}
\label{cooperatecost}
  K_m(S_l)=\begin{cases}
    K_m(S_l), & \text{if $|S_l| \geqslant 2$}\\
    0, & \text{otherwise}
  \end{cases}, \; \forall S_l \in \Pi,
\end{equation}
where $|S_l|$ is number of coalition members in the coalition $S_l$. The coalition formation cost is a non-decreasing function of the size of the coalition. Note that there is no communication between the UAV coalitions.

When two or more UAVs cooperate to form a coalition $S_l$ to support the FL training process in cell $i$, the FL training process starts as soon as the nearest UAV reaches cell $i$. This nearest UAV aggregates the local model parameters which are uploaded by the workers in cell $i$, and transmits the aggregated local model parameters to the model owner for the maximum number of iterations that it can support given its energy capacity. Then, the second nearest UAV in the same coalition takes over, aggregates the local model parameters and transmits the aggregated local model parameters to the model owner. Similarly, after the second nearest UAV completes its task, the third nearest UAV takes over and continues the process. If the maximum number of iterations that the UAV can support is greater than the remaining number of iterations of the FL training process, the UAV just needs to support the remaining number of iterations. If the required number of iterations announced by the model owner is completed by the nearer UAVs in a coalition, the farther UAVs do not need to support any iteration of the FL training process. For fairness purposes, since each UAV in the coalition completes different number of iterations, the profit earned by each UAV in the coalition depends on the number of iterations it completes.

Given that each UAV $m$ has an energy capacity of $E_m$, the maximum number of iterations that each UAV $m$ in coalition $S_l$ can support to facilitate the FL training process in cell $i$ is denoted by $n_{im}(S_l)$, where
\begin{equation}
\label{number}
n_{im}(S_l)=\frac{E_{m}-2e^f_{i,m}-K_{m}(S_l)}{e^{UR}_{im}+e^{DR}_{m}+e^{UT}_{m}+e^{DT}_{im}+e^{c}_{m}+e^{h}_{m}+e^{t}_{m}}.
\end{equation}

Since each coalition $S_l$, $\forall S_l \in \Pi$ receives different revenues for facilitating the FL training process in different cell $i$, an auction scheme is needed to elicit the valuations of different cells for each coalition in order to decide on the optimal allocation of UAV coalitions to the cells of workers given a partition $\Pi$. The mechanism of the auction scheme is explained further in detail in the next section. Note that the total number of iterations that all UAVs in a coalition $S_l$, $\forall S_l \in \Pi$ can support determines the choices of coalitions that each cell of workers can bid for. The coalition formation game determines the stable partition that allows the UAVs to earn a maximum profit.

\begin{algorithm}[t]
\caption{Algorithm for Coalition Formation of UAVs using Merge-and-Split.}
\footnotesize
\label{mergesplit}
\begin{algorithmic}[1]
 \renewcommand{\algorithmicrequire}{\textbf{Input:}}
 \renewcommand{\algorithmicensure}{\textbf{Output:}}
 \REQUIRE Set of UAVs $\mathcal{M}=\{1,\ldots,m,\ldots,M\}$
  \ENSURE Partition that provides highest profit $\Pi^*=\{S^*_1,\ldots, S^*_y,\ldots,S^*_Y\}$
 
 \STATE Initialize a set of partitions $\mathcal{Z} = \{\Pi_1,\ldots, \Pi_t,\ldots,\Pi_T\}$ that includes all possible partitions where $T$ is the total number of partitions

 \STATE Compute allocations of UAVs to cells of workers and the maximum obtainable profit given current partition, $\Pi_{curr}$ from Algorithm \ref{allocation}
 
 \STATE \textbf{\emph{\underline{Merge Mechanism:}}}
 	\FOR {\textbf{each} coalition $S_l$ in $\Pi_{curr}$}
 	\STATE Consider hypothetical partition $\Pi_{new}$ where $S_l \cup S_{l'}$, $l\neq l'$
		\STATE Compute the optimal allocation of UAVs and maximum attainable profit, $\gamma(\Pi_{new})$ from Algorithm \ref{allocation}
			\IF {$\gamma(\Pi_{new})>\gamma(\Pi_{curr})$}
				\STATE Merge coalitions $S_l$ and $S_{l'}$
				\STATE Update partition $\Pi_{curr} \gets \Pi_{new}$
				\STATE Update total profit $\gamma(\Pi_{curr}) \gets \gamma(\Pi_{new})$
			\ENDIF
	\ENDFOR
	
\RETURN $\Pi_{merge}=\{S_1,\ldots, S_g,\ldots,S_G\}$
 
\STATE \textbf{\emph{\underline{Split Mechanism:}}}
	\FOR {\textbf{each} coalition $S_g$ in $\Pi_{merge}$}
		\STATE Initialize set of possible coalition splits $\tilde{S}_g =\{\tilde{S}_1,\ldots,\tilde{S}_g,\ldots,\tilde{S}_G\}$\FOR {\textbf{each} coalition $\tilde{S}_g$ in $S_g$}
				\STATE Consider hypothetical partition $\Pi'_{new}$
				\STATE Compute the optimal allocation of UAVs and maximum attainable profit, $\gamma(\Pi'_{new})$ from Algorithm \ref{allocation}
					\IF {$\gamma(\Pi'_{new})>\gamma(\Pi_{curr})$}
						\STATE Split coalitions $S_g$ such that $S_g \gets \tilde{S}_g$
						\STATE Update partition $\Pi_{curr} \gets \Pi'_{new}$
						\STATE Update total profit $\gamma(\Pi_{curr}) \gets \gamma(\Pi'_{new})$
					\ENDIF
			\ENDFOR
	\ENDFOR

\RETURN Final partition that provides the highest profit $\Pi^*=\{S^*_1,\ldots, S^*_y,\ldots,S^*_Y\}$
\end{algorithmic}
\end{algorithm}

\begin{proposition}
The formation of grand coalition, where all the UAVs join a single coalition, is not always stable.
\end{proposition}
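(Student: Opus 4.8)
The plan is to establish the claim by a counterexample, since \emph{not always stable} is an existential statement: it suffices to exhibit one instance of the network in which the grand coalition $\mathcal{M}$ violates the stability notion, i.e.\ in which some sub-coalition can strictly increase its profit by breaking away. First I would isolate the two structural levers that drive instability. The first is the constraint, stated in Section~\ref{sec:coalitions}, that each coalition may facilitate the FL training of \emph{exactly one} cell; hence the grand coalition, however large its aggregate energy, can collect revenue from a single cell only. The second is that the cooperation cost $K_m(S_l)$ in \eqref{cooperatecost} is non-decreasing in $|S_l|$ and vanishes for singletons, so enlarging a coalition never lowers, and generically raises, the cost its members bear.

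Building on these, I would construct an instance with at least two cells ($I \ge 2$) of comparable importance and a UAV set that is ``over-provisioned'' relative to a single cell: choose energy capacities $E_m$ and distances $d^m_i$ so that, via the maximum-iteration expression \eqref{number}, a strict subset $S \subsetneq \mathcal{M}$ already accumulates enough iterations to complete all $\mu$ rounds for cell $i_1$ on its own, while the complementary set $\mathcal{M}\setminus S$ can independently complete all $\mu$ rounds for a second cell $i_2$. Under the grand-coalition partition $\{\mathcal{M}\}$, the coalition serves whichever single cell is most profitable; the members that are not needed to finish that cell (the farther UAVs, which support no iteration once $\mu$ is reached) complete zero iterations and therefore, under the iteration-proportional split described in Section~\ref{sec:coalitions}, earn zero profit while still contributing to the coalition's cooperation cost.

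The concluding step is the profit comparison. I would take the idle, zero-profit members $\mathcal{M}\setminus S$ as the deviating group and compare $\{\mathcal{M}\}$ with the partition $\{S,\ \mathcal{M}\setminus S\}$: in the latter, $\mathcal{M}\setminus S$ forms its own coalition, serves cell $i_2$, and secures a revenue share that, net of its (non-increased) cooperation and energy costs, is strictly positive --- an amount on the order of $\tfrac{q_{i_2}\sigma_{i_2}}{t^c}$ --- whereas its payoff inside $\{\mathcal{M}\}$ is zero. Thus $\mathcal{M}\setminus S$ has a strict incentive to split off, so $\{\mathcal{M}\}$ fails the definition of a stable partition and the proposition follows. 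I expect the main obstacle to be the simultaneous feasibility bookkeeping: one must pick parameters so that the per-cell iteration counts from \eqref{number} genuinely reach $\mu$ for \emph{each} sub-coalition while remaining consistent with the auction-based profit allocation detailed in Section~\ref{sec:auction}. I would sidestep the heavy arithmetic by first fixing a regime in which energy is abundant relative to $\mu$ (making feasibility slack and trivial to check) and only then tuning $q_i$, $\sigma_i$, and $K_m$ so that the two-cell revenue strictly dominates, keeping all routine energy computations implicit.
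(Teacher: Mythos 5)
Your proposal is correct, but it takes a genuinely different route from the paper's proof. The paper argues through coalitional game-theoretic machinery: it first shows the game is non-superadditive --- merging a coalition $S_1$ that is allocated to a cell with an unallocated coalition $S_2$ leaves the revenue (the cell's payment) unchanged while the energy and cooperation costs rise, so $x_i(S_1 \cup S_2) < x_i(S_1)$ --- and then argues that the core of the game is empty, concluding that the grand coalition does not form. Your argument instead directly exhibits a counterexample against the paper's own stability definition: you exploit the constraint that a coalition can serve only one cell, so the grand coalition's revenue is capped at a single cell's payment, whereas the partition $\{S,\ \mathcal{M}\setminus S\}$ can harvest two cells' revenues; the idle complement $\mathcal{M}\setminus S$ earns zero inside the grand coalition (indeed it still bears cooperation and flying costs under the coalition cost formula) and strictly positive profit after splitting, so it strictly prefers to deviate. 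The paper's cost-side argument (redundant members add cost but never revenue) is instance-free and connects to the superadditivity/core framework it cites, which in turn motivates the merge-and-split algorithm; your revenue-side argument is more elementary and more faithful to the quantifier ``not always'' --- it needs only the stability definition plus one constructed instance, and it establishes instability under either deviation criterion, since the total partition profit $\gamma(\cdot)$ rises and the deviating sub-coalition gains individually. The one caveat is that yours is a construction plan with parameters left unfixed, but you correctly identify the feasibility bookkeeping via Equation~(\ref{number}) as the obstacle and the abundant-energy regime as the fix, so the remaining work is routine.
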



\begin{proof}
In order to prove that the grand coalition is not always the optimal coalitional structure among the UAVs, we show that the coalitional game is not superadditive and prove that the core of the proposed game is empty by following the procedure in \cite{mergeandsplit}.

A coalitional game is superadditive if the cooperation among disjoint coalitions to form a larger coalition, guarantees a payoff which is at least equal to that obtained by the disjoint coalitions separately, i.e., $x(S_1 \cup S_2)\geq x(S_1)+x(S_2), S_1\in\mathcal{M}, S_2\in\mathcal{M}$ and $S_{1}\cap S_{2}=\emptyset$. Consider two coalitions where coalition $S_1$ is allocated to cell $i$ to support the FL task in cell $i$ and earning a profit of $x_i(S_1)$ whereas coalition~$S_2$ is not allocated to any cell and earning zero profit, i.e., $x_i(S_2)=0, \; \forall i \in \mathcal{I}$ . By merging coalition $S_1$ and coalition~$S_2$ to support the same cell $i$, the revenue, i.e., the payment price of cell $i$ does not change, but the total cost in terms of energy incurred and coalitional cost also increases. Thus, $x_i(S_1 \cup S_2)< x_i(S_1)$, i.e., the marginal profit of the UAVs from merging the two coalitions is negative. Therefore, the coalitional game is non-superadditive.

Next, we prove that the core of the proposed game is empty. As defined in \cite{mergeandsplit}, an imputation is a payoff vector $\mathbf{z}=\{z_1,\ldots,z_m,\ldots,z_M\}$ that satisfies the following two conditions:
\begin{enumerate}
    \item group rational, i.e., $\sum_{m \in \mathcal{M}}z_m=x(\mathcal{M})$, and 
    \item individually rational where each player obtains a benefit no less than acting alone, i.e., $z_m\geq x(\{m\})\;\forall i$.
\end{enumerate}
The core of the coalition is the set of stable imputations where there is no incentive for any coalition $S_l \in \mathcal{M}$ to reject the proposed payoff allocation $\mathbf{z}$, deviate from the grand coalition and form coalition $S_l$. The core of the coalition is defined as:
$$\tau=\Bigg\{\mathbf{z}\colon\sum_{m \in \mathcal{M}}z_m=x(\mathcal{M}) \; \mathrm{and} \sum_{m\in S_l}z_m\geq x(S_l)\; \forall S_l \in \mathcal{M} \Bigg\}$$
As we have previously established that the marginal profit of merging the coalitions can be negative, this means that the condition of individual rationality is violated, i.e., $\sum_{m\in S_l}z_m< x(S_l)$. In particular, the profit is higher if coalition $S_1$ does not merge with coalition $S_2$ to form a larger coalition. The core of the proposed coalitional game is empty.

Therefore, due to the non-superadditivity of the coalitional game and the emptiness of the core, the grand coalition does not form among the UAVs. Instead, smaller and disjoint coalitions of UAVs will form in the FL network.
\end{proof}

As such, we discuss the joint auction-coalition formation algorithm next.

\subsection{Coalition Formation Algorithm}
\label{subsec:coalitionalgorithm}

Following \cite{apt2006stable}, we define two simple merge-and-split operations to modify a partition $\Pi$ in $\mathcal{M}$.
\begin{itemize}
\item \textbf{Merge Rule:} Merge any set of coalitions $\{S_1,\ldots,S_l\}$  where $\gamma(\Pi) <  \gamma(\Pi_{new})$, thus $\{S_1,\ldots,S_l\} \to \bigcup^{l}_{j=1}S_j$. 

\item \textbf{Split Rule:} Split any set of coalitions $\{\bigcup^{l}_{j=1}S_j\} $ where  $\gamma(\Pi) <  \gamma(\Pi'_{new})$, thus $\bigcup^{l}_{j=1}S_j\ \to \{S_1,\ldots,S_l\}$.

\end{itemize}

The merge-and-split algorithm for the coalition formation is presented in Algorithm \ref{mergesplit}.

The merge mechanism is illustrated from the perspective of a representative coalition $S_l$ in a given partition $\Pi$. The coalition $S_l$ decides to merge with another coalition $S_{l'}$ where $l \neq l'$ and $S_l \in \mathcal{M}$ if the total profit of the UAVs is improved. In order to evaluate the profits, the UAVs assume the hypothetical partition $\Pi_{new}$ and compute maximum attainable profit of the hypothetical partition, $\gamma(\Pi_{new})$ (lines 5-6), of which the bidding of the workers and the allocation of UAVs are discussed in the next section. If the total profit of the UAVs in the hypothetical partition $\Pi_{new}$, which is denoted by $\gamma(\Pi_{new})$, is higher than that in its current partition $\Pi_{curr}$, which is represented as $\gamma(\Pi_{curr})$, coalition $S_l$ will merge with coalition $S_{l'}$ to form new partition $\Pi_{new}$ (lines 7-8). Then, given this new partition $\Pi_{new}$, the current partition and the value of the maximum profit will be updated (lines 9-10). On the other hand, $\gamma(\Pi_{new})$ is less than that of the current partition $\Pi_{curr}$, $\gamma(\Pi_{curr})$, there will be no change to the current partition $\Pi_{curr}$. For the next iteration, any coalition $S_l$ in the given partition, $\Pi_{curr}$, will consider to form a coalition with another coalition $S_{l'}$. The merge mechanism terminates when all possible partitions have been considered. At the end of the merge mechanism, the algorithm returns a resulting partition $\Pi_{merge}=\{S_1, \ldots,S_g,\ldots, S_G\}$ (line 13). Consequently, the split mechanism is carried out.

Similarly, the split mechanism is illustrated from the perspective of a representative coalition $S_g$ in partition $\Pi_{merge}$. Let $\mathcal{\tilde{S}}_g$ be the set of possible coalition splits for coalition~$S_g$, where $\mathcal{\tilde{S}}_g =\{\tilde{S}_1,\ldots,\tilde{S}_g,\ldots,\tilde{S}_G\}$ (line 16). Coalition~$S_g$ considers one of the possible ways to split its coalition into smaller disjoint coalitions (line 17). Similar to the merge mechanism, in order to evaluate the profits, the UAVs assume the hypothetical partition $\Pi'_{new}$ and retrieve the bids from the auction and compute their maximum obtainable profits (lines 18-19). If the total profit of the UAVs is improved with more disjoint coalitions, the coalition $S_g$ will proceed to split and form a new partition $\Pi'_{new}$ (lines 20-21). The current partition and the value of maximum profit will be updated (lines 22-23). Then, given the new partition $\Pi'_{new}$, any coalition will split and evaluate their profits again. The partition is maintained if the profit is higher than if the split is carried out. The split mechanism terminates when all possible partitions with smaller coalitions have been considered. At the end of the merge-and-split algorithm, the final partition $\Pi^*=\{S^*_1,\ldots,S^*_y,\ldots,S^*_Y\}$ is returned (line 27).

\begin{proposition}
The final partition $\Pi^*=\{S^*_1,\ldots,S^*_y,\ldots, S^*_Y\}$ is a stable partition that maximizes the total profit of the UAVs.
\end{proposition}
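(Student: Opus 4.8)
The plan is to prove the two assertions of the statement separately: that $\Pi^*$ is \emph{stable} and that it \emph{maximizes} the total profit of the UAVs. I would first establish that Algorithm~\ref{mergesplit} terminates, then read stability directly off the termination condition, and finally confront the global-optimality claim. For termination, the key observation is that $\mathcal{M}$ is finite, so the number of partitions is finite and bounded by the Bell number $D_M$ of Equation~(\ref{bellman}). Every merge update and every split update is executed \emph{only} under a strict-improvement guard ($\gamma(\Pi_{new})>\gamma(\Pi_{curr})$ at line~7, and $\gamma(\Pi'_{new})>\gamma(\Pi_{curr})$ at line~20). Hence the sequence of values $\gamma(\Pi_{curr})$ is strictly increasing along the run of the algorithm, so no partition is ever revisited; since only $D_M$ partitions exist, the algorithm performs finitely many updates and halts with a well-defined $\Pi^*$.

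Next I would establish stability in the sense of Definition~3, namely that no coalition can profitably merge with another coalition or split into smaller disjoint coalitions. By construction the algorithm halts precisely when neither the merge loop nor the split loop can locate an operation raising $\gamma$; that is, $\Pi^*$ is closed under both the Merge Rule and the Split Rule introduced in Section~\ref{subsec:coalitionalgorithm}. Appealing to the merge-and-split framework of \cite{apt2006stable}, a partition that is closed under both rules is $\mathbb{D}_{hp}$-stable, which is exactly the non-deviation condition of Definition~3. Therefore $\Pi^*$ is a stable partition, and this part of the argument is essentially a bookkeeping check against the halting criterion.

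For the maximization claim I would exploit the structure of line~1, where $\mathcal{Z}$ is initialized as the set of \emph{all} $T=D_M$ partitions, together with the fact that Algorithm~\ref{allocation} returns, for any fixed partition, the profit-maximizing allocation of UAV coalitions to cells. The intended argument is that because the search moves only to partitions of strictly higher profit and the candidate space is enumerated exhaustively in $\mathcal{Z}$, every partition is eventually compared and the retained $\Pi^*$ attains $\gamma(\Pi^*)=\max_{\Pi\in\mathcal{Z}}\gamma(\Pi)$. Thus the plan leans on exhaustive enumeration to upgrade the monotone-improvement property into a certificate of global optimality, rather than on local closure alone.

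The hard part will be precisely this last step. A generic merge-and-split process guarantees only a locally stable ($\mathbb{D}_{hp}$-stable) outcome, and in general two distinct stable partitions can carry different profits, so local closure under merge and split does \emph{not} by itself imply that $\gamma(\Pi^*)$ is globally maximal. To close this gap I must argue that no partition lying outside the merge-and-split trajectory can have profit exceeding $\gamma(\Pi^*)$ — which is where the exhaustive enumeration of $\mathcal{Z}$ in line~1 must do the real work, ensuring that the comparison is made against all of $\mathcal{Z}$ and not merely against the partitions visited by a purely greedy hill-climb. I would therefore take care to state explicitly that global optimality is inherited from the enumeration step, and to separate that claim from the (weaker) stability guarantee that the merge-and-split operations provide on their own.
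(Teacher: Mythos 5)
Your termination and stability arguments coincide with the paper's own proof: strict monotone improvement of $\gamma(\Pi_{curr})$ plus finiteness of the partition count $D_M$ gives termination, and the halting condition (no profitable merge, no profitable split) is read off as the non-deviation condition of Definition~3. That half of your proposal is correct and is essentially the paper's argument.

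The problem is your maximization step. You propose to upgrade local closure to global optimality by appealing to line~1 of Algorithm~\ref{mergesplit}, where $\mathcal{Z}$ is initialized to contain all $D_M$ partitions, and then claiming that ``every partition is eventually compared.'' But $\mathcal{Z}$ is never referenced again after line~1: the merge loop only evaluates hypothetical partitions obtained by merging two coalitions of the \emph{current} partition, and the split loop only evaluates partitions obtained by splitting one \emph{current} coalition. The algorithm is a greedy hill-climb over the merge/split neighborhood, not an exhaustive enumeration --- the paper's own complexity analysis confirms this, since it bounds the number of merge attempts by $O(M^3)$, far below $D_M$, and argues that splits are only performed within already-formed (small) coalitions. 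So the premise on which your global-optimality argument leans is false as a statement about what the algorithm computes. Your diagnosis of the difficulty is exactly right: closure under merge and split (what you call $\mathbb{D}_{hp}$-stability) does not by itself imply $\gamma(\Pi^*)=\max_{\Pi}\gamma(\Pi)$, since distinct merge-and-split-stable partitions can carry different profits. But you should be aware that the paper's proof does not close this gap either: it asserts that because the profit ``cannot be increased further by changing the partition,'' it is the maximum, silently conflating ``cannot be improved by a single merge or split'' with ``cannot be improved by any repartition.'' In short, only the stability claim is rigorously established --- by you and by the paper; your attempted repair of the maximality claim fails because it attributes to the algorithm an exhaustive search it does not perform.
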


\begin{proof}
Given any current partition $\Pi_{curr}=\{S_1,\ldots,S_l,\ldots, S_L\}$, where $S_{l}\cap S_{l'}=\emptyset$ for $l\neq l'$, $\bigcup_{l=1}^{L}S_{l}=\mathcal{M}$, the current partition $\Pi_{curr}$ is only updated when the total profit of the hypothetical partition $\Pi_{new}$ is higher than that of the current partition $\Pi_{curr}$ by either merging two disjoint coalitions or splitting a single coalition into multiple smaller disjoint coalitions. Therefore, the merge-and-split algorithm (Algorithm \ref{mergesplit}) generates a sequence of partitions where each partition has either equal profit as or higher profit than the partition from previous iteration. The total number of possible partition $D_M$ is a finite number. The algorithm will eventually terminate at a partition $\Pi^*=\{S^*_1,\ldots,S^*_y,\ldots, S^*_Y\}$ where the total profit of partition $\Pi^*$ cannot be increased further by merging or splitting coalitions. Hence, the final partition $\Pi^*=\{S^*_1,\ldots,S^*_y,\ldots, S^*_Y\}$ is stable. In addition, at stable partition $\Pi^*=\{S^*_1,\ldots,S^*_y,\ldots, S^*_Y\}$, since the total profit of the UAVs cannot be increased further by changing the partition $\Pi^*$, the total profit given by partition $\Pi^*=\{S^*_1,\ldots,S^*_y,\ldots, S^*_Y\}$ is the maximum.
\end{proof}
  
Next, we present an auction design to elicit the valuations of cells of workers for all possible UAV coalitions in order to determine the allocation of UAV coalitions to the cells of workers, thereby maximizing the profit of the UAVs given a partition $\Pi$.

\section{Auction Design}
\label{sec:auction}

We consider multiple auctioneers and multiple buyers at the same time. The UAVs are the auctioneers which conduct the auctions and offer the resources to facilitate the FL training process. The cells of workers are the buyers that submit bids and pay the UAVs for their energy resource. 

\subsection{Buyers' Bids}
Firstly, the workers in cell $i$ ($\forall i \in \mathcal{I}$) submit their bids to each possible coalition $S_l$ in a given partition $\Pi$ to indicate their valuations for each coalition. 

Each cell needs to build their own preference over all the possible coalitions that can provide services, where cells of workers need to compare the coalitions and rank their preferences. As such, the concept of  preference relation is introduced to evaluate the order of preference of each player.

\begin{definition}
For any player $i \in \mathcal{I}$, a preference relation \cite{hedonic} $\succeq_i$ is defined as a complete, reflexive, and transitive binary relation over the set of all coalitions that cell $i$ can choose. 
\end{definition}

In particular, for any cell $i \in \mathcal{I}$, given $S_1 \subseteq \mathcal{M}$ and $S_2 \subseteq \mathcal{M}$, 
$S_1 \succeq_i S_2$ means that cell $i$ prefers coalition $S_1$ over coalition $S_2$, or at least cell $i$ prefers both coalition equally. The asymmetric counter part of $\succeq_i$, which is represented by $\succ_i$, when used in $S_1 \succ_i S_2$, means that cell $i$ strictly prefers coalition $S_1$ over coalition $S_2$. It is worth noting that the preference relation is defined to quantify the preferences of the players, which can be application-specific. For example, the IoV components prefer UAVs that are nearer for road safety applications but prefer UAVs that provide higher bandwidth for entertainment services. It can represent a function of parameters such as the utility gain by joining the coalition.

The preference relation for any cell $i \in \mathcal{I}$ can be represented by the following formula,
\begin{equation}
\label{relation}
S_1 \succeq_i S_2 \iff \alpha_i(S_1)\geq \alpha_i(S_2),
\end{equation}
where $S_1\subseteq \mathcal{M}$ and $S_2 \subseteq \mathcal{M}$, 
$\alpha_i(S_l)$ is the preference function for any cell $i \in \mathcal{I}$ and for any coalition $S_l$ facilitating FL training process in cell $i$.

In general, the more important cells of workers have incentives to pay the UAVs more to support the FL training process as they receive higher reward for completing the FL task. Meanwhile, the valuations of the cells for the UAVs depend on the latency of the task, i.e., how quickly the model parameters are transmitted to the model owner. As such, the cells of workers have more incentive to employ UAVs that are nearer to them since shorter time is required for them to reach the cell to facilitate FL training. Although the FL training process starts as soon as the nearest UAV in a coalition reaches the cell of workers, the valuation of the cell of workers for the UAV coalition does not depend on the nearest UAV in the coalition. In fact, the valuation of the cell of workers for a coalition $S_l$ depends on the farthest UAV in the coalition as they need to wait for the farthest UAV to arrive to continue and complete the FL training process, even if other nearer UAVs can reach the cell and support their parts of the training process in a short amount of time. Specifically, the time for the farthest UAV to reach the cell may be longer than the time needed by the nearer UAVs to travel to the cell and complete their parts of the FL training process.

Thus the valuation of workers in cell $i$ for coalition $S_l$, which is also the preference function, is defined as
\begin{equation}
\alpha_{i}(S_l)=\theta_{1}{q_{i}\sigma_{i}}+ \frac{\theta_2}{\max \limits_{m \in S_l}{t^m_{i}}},
\end{equation}
where $t^m_i$ is the time needed for UAV $m$ to travel to cell $i$, $\theta_1$ is the weight parameter of the importance of the cell of workers whereas $\theta_2$ is the weight parameter of the time needed for the UAVs to reach the cells. In particular, each cell $i$ submits its valuations for each UAV coalition in a given partition $\Pi$. Each cell $i$ will only submit its bids to any coalition $S_l$, $\forall S_l \in \Pi$  if the total number of iterations that can be completed by the coalition $S_l$ is greater than the required number of iterations specified by the model owner. In other words, coalition $S_l$ will only receive bids from any cell $i$ if the following condition is satisfied:
\begin{equation}
\sum_{m\in S_l}n_{im}(S_l)\geq \mu. 
\end{equation}

The traditional first-price auction, in which the highest bidder pays the exact bid it submits, maximizes the revenue of the seller, but does not guarantee that the bidders submit their true valuations. In other words, the bidders have incentives to not submit their true valuations. In particular, if the bidder bids lower and wins the bid, its utility increases. Hence, by adopting the first-price auction, the property of incentive compatibility of the auction is not guaranteed. As such, the second-price auction is adopted to determine the payment price of the bid winners where the bid winners need to pay only the price equals the winning bid if the original bid winner is not considered in the auction. The payment price is determined to incentivize the cells of workers to honestly report their true valuations such that the valuations of the cells of workers are equal to their bids, i.e., $\alpha_i(S_l)=\lambda_i(S_l)$, where $\lambda_i(S_l)$ is the bid submitted by cell $i$ to coalition $S_l$. The payment price of cell $i$ for coalition $S_l$ is represented by $p_i(S_l)$. Thus, the utility of cell $i$ if it wins the bid is defined as:
\begin{equation}
u_i(S_l)=\alpha_i(S_l)-p_i(S_l). 
\end{equation}
On the other hand, if cell $i$ is not allocated any UAV coalition, the utility is $u_i(S_l)=0$.

\subsection{Sellers' Problem}

The UAVs as the auctioneers need to decide on the optimal allocation to different cells of workers. The algorithm for the allocation of profit-maximizing UAVs is presented in Algorithm \ref{allocation}.

The objective of the UAVs is to maximize sum of their individual profits where the UAVs cooperate fully with each other and are not selfish. In order to calculate the maximum total profits for any partition $\Pi$, the maximum possible profit for each coalition in the partition needs to be calculated first. The coalition will only earn the actual profit if it is allocated to support the FL task in any cell $i$. Given any partition $\Pi$, the revenue of coalition $S_l$ in supporting cell $i$ is the payment price, i.e., $p_{i}(S_l)$, which can be evaluated from the bid values (lines 3-4). The cost of coalition $S_l$ in supporting cell $i$ is the summation of energy incurred by each UAV in the coalition (line 5). The cost of coalition $S_l$ in supporting cell $i$ is denoted as $c_i(S_l)$, which is defined as:
\begin{multline}
c_i(S_l)=\delta_m\sum_{m \in S_l}2e^f_{i,m}+\sum_{m \in S_l}K_m(S_l)
\\+\delta_m\sum_{m \in S_l} n^c_{im}(e^{UR}_{im}+e^{DR}_m+e^{UT}_m+e^{DT}_{im}+e^c_m+e^h_m+e^t_m),
\end{multline}
where $n^c_{im}$ is the number of iterations that is supported by UAV $m$ in cell $i$ and $\delta_m$ is the cost coefficient per unit of energy spent by UAV $m$. Note that it is not necessary that each UAV in the coalition supports the maximum number of iterations, $n_{im}$. If the number of remaining iterations is less than the maximum number of iterations that the UAV can support, the UAV needs to only support the remaining number of iterations, which requires less energy.

The possible profit earned by coalition $S_{l}$ for supporting FL training process in any cells $i \in \mathcal{I}$ is the difference between the revenue earned and the cost incurred by coalition $S_l$ in facilitating the FL training in cell $i$ (line 5). The possible profit denoted by $x_i(S_l)$ can be represented by
\begin{equation}
x_i(S_l)=p_i(S_l)-c_i(S_l). 
\end{equation}

In this regard, each coalition $S_{l}$ in partition $\Pi$ can now evaluate the most preferable cell that provides them with the maximum profit (line 12). If two or more coalitions choose the same cell of workers, the coalition of UAVs which is the most preferred based on the bids submitted will be allocated (lines 13-14). For example, according to the preference relation in Equation (\ref{relation}), coalition $S_l$ will be allocated to cell $i$, instead of coalition $S_{l'}$ if $\alpha_i(S_l)\geq \alpha_i(S_{l'})$. The coalition of UAVs which are not allocated any cell will not earn any profit. If there is no competition among coalitions of UAVs for a cell, the UAV coalition is allocated to the cell (lines 15-16). Given partition~$\Pi$ and the allocation of UAV coalitions to the cells, the total profits can be calculated by summing up the individual profits of each coalition which is allocated a cell (lines 18-19). The total profits of partition $\Pi$ is given by $\gamma(\Pi)$, which is defined as follows:
\begin{equation}
\gamma(\Pi)=\sum^{I}_{i=1}\sum^{L}_{l=1}\beta_{il} x_i(S_l),
\end{equation}
where $\beta_{il} = \{\beta_{i1},\ldots,\beta_{il},\ldots,\beta_{iL}\}$ is an allocation vector which indicates whether coalition $S_l$ is allocated to cell $i$. In particular, $\beta_{il}=1$ if coalition $S_l$ is allocated to cell $i$ and $\beta_{il}=0$ if coalition $S_l$ is not allocated to
cell $i$.

Then, the allocated coalition is removed from the current partition, $\Pi$ and the corresponding cell is also removed from the list of cells (line 20). The number of cells to be supported reduces by one (line 21).

\begin{algorithm}[t]
\caption{Algorithm for Allocation of Profit-Maximizing UAVs.}
\footnotesize
\label{allocation}
\begin{algorithmic}[1]
 \renewcommand{\algorithmicrequire}{\textbf{Input:}}
 \renewcommand{\algorithmicensure}{\textbf{Output:}}
 \REQUIRE Current partition $\Pi=\{S_1,\ldots,S_l,\ldots,S_L\}$
 \ENSURE Maximum obtainable profit, $\gamma(\Pi)$ and allocation of UAVs to cells of workers
	\FOR {\textbf{each} cell of workers $i$ in $\mathcal{I}$}
    			\FOR {\textbf{each} coalition of UAVs $S_l$ in $\Pi$}
    				\STATE Submit bid values, $\alpha_i(S_l)$
    				\STATE Compute payment price, $p_i(S_l)$ if cell $i$ is the bid winner for coalition $S_l$
				\STATE Compute the cost of each UAV coalition $S_l$ in supporting cell $i$
				\STATE Compute the profit of each UAV coalition $S_l$ in supporting cell $i$, $x_{i}(S_l)$
    			\ENDFOR
    	\ENDFOR
	\STATE Initialize $t=I$
	\STATE Initialize $\gamma(\Pi)=0$
	\WHILE{$t \neq 0$}
		\STATE Identify the highest profit for each coalition and its corresponding cell, $x_i(S_l)$ in $\Pi$ 
		\IF {Two or more UAV coalitions compete for the same cell}
			\STATE The cell with the highest bid is allocated the coalition
		\ELSE
			\STATE The coalition (without competition) is allocated the cell
		\ENDIF
		
		\STATE Update partition $\Pi$
		\STATE $\gamma(\Pi)\gets \gamma(\Pi)+x_i(S_l)$
		\STATE Remove allocated UAVs and cell from the partition $\Pi$
		\STATE $t=t-1$
	\ENDWHILE
\RETURN $\gamma(\Pi)$ and allocation of UAVs to cells of workers
\end{algorithmic}
\end{algorithm}

\subsection{Analysis of the Auction}

We now analyze the individual rationality and the incentive compatibility of the auction mechanism. Firstly, to encourage the buyers to participate in the auction, the auctioneers need to ensure that the buyers receive positive payoffs. Moreover, the auction mechanism discourages the buyers from submitting untruthful valuations that do not reflect the true valuations of buyers' bids.

\begin{proposition}
Each buyer, i.e., cell of workers, achieves individual rationality in this auction mechanism.
\end{proposition}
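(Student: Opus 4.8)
The plan is to verify directly that every cell $i \in \mathcal{I}$ obtains a non-negative payoff, which is exactly what individual rationality requires. The utility is specified piecewise: $u_i(S_l) = \alpha_i(S_l) - p_i(S_l)$ when cell $i$ is allocated coalition $S_l$, and $u_i(S_l) = 0$ when cell $i$ wins no coalition. First I would dispose of the losing case, where the payoff is identically $0$ and hence non-negative, and where a non-winning cell is never charged. This reduces the claim to the winning case.

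For the winning case, the key is the second-price (Vickrey) payment rule adopted in Section~\ref{sec:auction}: the price $p_i(S_l)$ charged to the winning cell equals the highest bid that would win were cell $i$'s own bid removed from consideration. I would argue that, because cell $i$ is the bid winner for coalition $S_l$, its submitted bid $\lambda_i(S_l)$ is at least as large as every other bid competing for that coalition, and in particular at least as large as the price $p_i(S_l)$ determined by the next-highest bid. This yields the inequality $p_i(S_l) \leq \lambda_i(S_l)$.

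Next I would combine this bound with truthful reporting, $\lambda_i(S_l) = \alpha_i(S_l)$, which the second-price rule is designed to induce, to conclude $p_i(S_l) \leq \alpha_i(S_l)$ and therefore $u_i(S_l) = \alpha_i(S_l) - p_i(S_l) \geq 0$. Taking both cases together shows that no cell ever attains a negative payoff, which establishes individual rationality.

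I expect the main obstacle to be pinning down the definition of $p_i(S_l)$ so that the inequality $p_i(S_l) \leq \lambda_i(S_l)$ holds rigorously in this multi-auctioneer, multi-buyer setting, where several cells may bid for the same coalition and the final assignment is resolved by the competition and tie-breaking rules in Algorithm~\ref{allocation}. Care is needed to confirm that the price ``equal to the winning bid when the original winner is excluded'' is always some competing cell's actual bid (or zero if no competitor exists), so that it cannot exceed the winner's own bid. Once the payment is unambiguously identified as this second-highest bid, the non-negativity of the utility follows immediately from truthfulness.
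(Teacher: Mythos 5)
Your proposal is correct and follows essentially the same argument as the paper: split into the losing case (zero utility, no payment) and the winning case, then use the second-price payment rule together with truthful bidding $\lambda_i(S_l)=\alpha_i(S_l)$ to conclude $p_i(S_l)\leq \alpha_i(S_l)$ and hence $u_i(S_l)\geq 0$. Your closing caveat about identifying $p_i(S_l)$ as a competing cell's actual bid (or zero) is a fair observation, but it matches the paper's implicit reasoning that otherwise the competitor $\lambda_{i'}(S_l)$ would have won instead.
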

\begin{proof}
Firstly, for each cell of workers which has no winning bid, its utility is zero, i.e., $u_i(S_l)=0$ since it does not earn a revenue from the model owner and it also does not pay any UAV coalition to complete the FL task. Secondly, for the cells of workers with winning bids, i.e., allocated a UAV coalition to support the FL training process, their utility is:
\begin{equation}
u_i(S_l)=\alpha_i(S_l)-p_i(S_l)
=\lambda_i(S_l)-p_i(S_l),
\end{equation}
The utility gained by the bid winner is non-negative, i.e., $u_i(S_l) \geqslant 0$. When $\lambda_i(S_l)$ wins the bid, its true valuation must be greater than the payment price, i.e., $\alpha_i(S_l) \geqslant p_i(S_l)=\lambda_{i'}(S_l)$. Otherwise, $\lambda_{i'}(S_l)$ wins the bid instead.
\end{proof}

In order to prove the property of incentive compatibility, i.e., the truthfulness of buyers' bids, we show that there is no incentive for the buyers to submit bids other than their true valuations by following the procedure in \cite{winningbid}. The auction mechanism is truthful if it satisfies the following two conditions:
\begin{enumerate}
\item The winning bid algorithm is monotonic.
\item The pricing is independent of the winning bid.
\end{enumerate}

\begin{proposition}
\label{monotonic}
The winning bid algorithm is monotonic. For each bid $\lambda_i(S_l)$, if bid $\lambda_i(S_l)$ wins, then bid $\lambda'_{i}(S_l)$ also wins, where $\lambda'_i(S_l)=\lambda_i(S_l)+\phi$ and $\phi > 0$.
\end{proposition}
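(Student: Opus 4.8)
The plan is to hold every bid fixed except cell $i$'s bid for coalition $S_l$ and to argue directly from the auction rule that a unilateral increase in this single bid cannot turn a winning bid into a losing one. This is the standard monotonicity property of a second-price mechanism, so the work is in verifying it survives the matching structure of Algorithm \ref{allocation}.

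First I would pin down precisely what it means for $\lambda_i(S_l)$ to win. Since each coalition $S_l$ acts as an auctioneer running a second-price auction, cell $i$ wins $S_l$ exactly when $\lambda_i(S_l)$ is the highest bid submitted to $S_l$, in which case $S_l$ is allocated to cell $i$ and the payment $p_i(S_l)$ equals the second-highest bid. In particular, $\lambda_i(S_l)\ge\lambda_{i'}(S_l)$ for every rival cell $i'$ bidding for $S_l$, and in the conflict-resolution step cell $i$'s valuation $\alpha_i(S_l)=\lambda_i(S_l)$ dominates that of any competing allocation. Next I would substitute $\lambda'_i(S_l)=\lambda_i(S_l)+\phi$ with $\phi>0$ and track the only quantity that moves. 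Because $\phi>0$ we have $\lambda'_i(S_l)=\lambda_i(S_l)+\phi>\lambda_i(S_l)\ge\lambda_{i'}(S_l)$ for every rival $i'$, so cell $i$'s bid is still strictly the highest bid for $S_l$ and the winner of $S_l$'s auction is unchanged. Moreover, since cell $i$ was already the top bidder, the second-highest bid for $S_l$ is unaffected, so the payment $p_i(S_l)$ and hence the coalition's profit $x_i(S_l)=p_i(S_l)-c_i(S_l)$ are unchanged; consequently $S_l$'s ranking of cells and every conflict comparison that previously favored the match $(i,S_l)$ still favor it, and $\lambda'_i(S_l)$ wins as well.

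The step I expect to be the main obstacle is not the single pairwise comparison but confirming that the \emph{iterative} greedy matching of Algorithm \ref{allocation} does not reshuffle the allocation to cell $i$'s disadvantage once its bid is raised. I would handle this by observing that raising $\lambda_i(S_l)$ changes exactly one valuation, and strictly upward, while leaving all payments $p_i(S_l)$ (the second-prices, which are insensitive to the top bidder) and all costs $c_i(S_l)$ fixed. Hence no comparison that previously resolved in favor of $(i,S_l)$ can reverse, and no competing pair can newly displace cell $i$ at coalition $S_l$; monotonicity of the winning-bid rule then follows, which is the property needed to subsequently establish incentive compatibility.
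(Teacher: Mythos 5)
Your proof is correct and takes essentially the same approach as the paper: both arguments rest on the observation that raising an already-winning bid leaves the second-price payment $p_i(S_l)$ and the cost $c_i(S_l)$, and hence the coalition's profit $x_i(S_l)$, unchanged, so the allocation of coalition $S_l$ to cell $i$ persists. Your extra verification that the iterative matching in Algorithm \ref{allocation} cannot reshuffle against cell $i$ (since exactly one valuation moves, and strictly upward) is a careful elaboration of a point the paper's terser proof leaves implicit.
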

\begin{proof}
Suppose that $\lambda_i(S_l)$ is the winning bid with the UAV coalition $S_l$ earning a profit $x_i(S_l)$, a higher bid $\lambda'_i(S_l)$, where $\lambda'_i(S_l)= \lambda_i(S_l)+\phi$ and $\phi>0$, will result in the same profit $x_i(S_l)$, since the cost of coalition $S_l$ for supporting the FL training process in cell $i$ and the price of the winning bid, $p_i(S_l)$ are the same, $\lambda'_i(S_l)$ still wins the bid.
\end{proof}
\begin{proposition}
\label{sameprice}
If cell $i$ wins the bidding with $\lambda_i(S_l)$ and $\lambda'_i(S_l)$, the payment price $p_i(S_l)$ is the same for both bids.
\end{proposition}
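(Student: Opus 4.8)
The plan is to show that the payment price $p_i(S_l)$ depends only on the bids of the \emph{other} cells competing for coalition $S_l$, and not on cell $i$'s own submission; once this is established the statement follows immediately, since $\lambda_i(S_l)$ and $\lambda'_i(S_l)$ differ only in cell $i$'s own bid.

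First I would recall the payment rule of the adopted second-price mechanism: when cell $i$ wins coalition $S_l$, the price it is charged equals the bid that would win $S_l$ were cell $i$'s own bid removed from the auction. In the notation of the individual-rationality proof this is $p_i(S_l)=\lambda_{i'}(S_l)$, where $\lambda_{i'}(S_l)=\max_{j\neq i}\lambda_j(S_l)$ is the highest of the remaining competing bids for $S_l$. The crucial observation is that this quantity is a function only of the competing-bid set $\{\lambda_j(S_l):j\neq i\}$.

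Next I would invoke the hypothesis together with Proposition~\ref{monotonic}: by monotonicity, since $\lambda_i(S_l)$ wins, the larger bid $\lambda'_i(S_l)=\lambda_i(S_l)+\phi$ with $\phi>0$ also wins, so in both scenarios cell $i$ is indeed the allocated winner. Replacing $\lambda_i(S_l)$ by $\lambda'_i(S_l)$ changes only cell $i$'s own submission and leaves every other bid $\lambda_j(S_l)$, $j\neq i$, untouched. Hence the competing-bid set $\{\lambda_j(S_l):j\neq i\}$, and in particular its maximum $\lambda_{i'}(S_l)$, is identical in both cases. Therefore $p_i(S_l)=\lambda_{i'}(S_l)$ takes the same value for $\lambda_i(S_l)$ and for $\lambda'_i(S_l)$, which is the claim.

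I do not expect any genuine obstacle here, as the result is essentially the defining property of the second-price rule, namely that the winner's charge is fixed by the runner-up bid and is independent of the winner's own bid. The only point requiring care is to confirm, via Proposition~\ref{monotonic}, that both bids actually win, so that cell $i$ is the allocated winner in each case and the payment is genuinely computed from the same runner-up bid; without this one could not meaningfully speak of ``the payment price'' for both bids. Combined with the monotonicity of Proposition~\ref{monotonic}, this establishes the two sufficient conditions for the mechanism to be incentive compatible, so that each cell's dominant strategy is to bid its true valuation $\alpha_i(S_l)$.
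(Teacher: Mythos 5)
Your proof is correct and follows essentially the same route as the paper's: the payment under the second-price rule equals the highest competing bid, which is unaffected by whether cell $i$ submits $\lambda_i(S_l)$ or $\lambda'_i(S_l)$, so the price is identical in both cases. Your version is merely more explicit than the paper's (spelling out $p_i(S_l)=\max_{j\neq i}\lambda_j(S_l)$ and confirming via Proposition~\ref{monotonic} that both bids win), which is a harmless elaboration rather than a different argument.
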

\begin{proof}
As long as cell $i$ wins the bid, the payment price of the bid winner is the second price. By bidding either $\lambda_i(S_l)$ or $\lambda'_i(S_l)$, the payment price does not change. Hence, the payment price is independent of the winning bid $\lambda_i(S_l)$ or $\lambda'_i(S_l)$.
\end{proof}

\begin{proposition}

No buyer, i.e., cell of workers can achieve higher payoff by bidding with values other than its true valuations, i.e., if a buyer bids $\lambda'_i(S_l)\neq \lambda_i(S_l)$, its utility is $u'_i(S_l) \leq u_i(S_l)$.
\end{proposition}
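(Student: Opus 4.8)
The plan is to establish incentive compatibility through a case analysis on whether the buyer's deviation changes the outcome of the allocation, leveraging the two structural properties already proved: the monotonicity of the winning-bid rule (Proposition~\ref{monotonic}) and the independence of the payment price from the buyer's own winning bid (Proposition~\ref{sameprice}). The crucial observation driving the argument is that, because $p_i(S_l)$ is the second price, it never depends on the buyer's own bid but only on the competing bids; a buyer can therefore influence only \emph{whether} it wins, not \emph{how much} it pays once it has won. Throughout, the utility is measured against the true valuation $\alpha_i(S_l)$ irrespective of the reported bid, so a winning buyer always obtains $\alpha_i(S_l)-p_i(S_l)$.

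First I would fix a cell $i$ with true valuation $\alpha_i(S_l)$ and let $p_i(S_l)$ denote the associated second price, i.e., the threshold the buyer must clear to win. Under truthful bidding $\lambda_i(S_l)=\alpha_i(S_l)$, the second-price rule awards the allocation precisely when $\alpha_i(S_l)\geq p_i(S_l)$. I would then split into two cases according to whether truthful bidding wins or loses.

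In the first case truthful bidding wins, so $u_i(S_l)=\alpha_i(S_l)-p_i(S_l)\geq 0$. A deviation $\lambda'_i(S_l)>\lambda_i(S_l)$ still wins by monotonicity (Proposition~\ref{monotonic}) and pays the identical price by Proposition~\ref{sameprice}, hence $u'_i(S_l)=u_i(S_l)$. A deviation $\lambda'_i(S_l)<\lambda_i(S_l)$ either still wins, again with unchanged payment and therefore unchanged utility, or now loses, giving $u'_i(S_l)=0\leq\alpha_i(S_l)-p_i(S_l)=u_i(S_l)$. In the second case truthful bidding loses, so $u_i(S_l)=0$ and $\alpha_i(S_l)<p_i(S_l)$. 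Underbidding keeps the buyer losing, so $u'_i(S_l)=0=u_i(S_l)$; overbidding may cause it to win, but the payment remains the unchanged second price $p_i(S_l)$, yielding $u'_i(S_l)=\alpha_i(S_l)-p_i(S_l)<0=u_i(S_l)$. Collecting the four sub-cases gives $u'_i(S_l)\leq u_i(S_l)$ for every deviation $\lambda'_i(S_l)\neq\lambda_i(S_l)$.

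The step I expect to be the main obstacle is not any individual inequality but the bookkeeping that guarantees $p_i(S_l)$ is genuinely the same quantity across the truthful and deviating scenarios; this is exactly what Propositions~\ref{monotonic} and~\ref{sameprice} supply, so the argument reduces to invoking them and verifying each case. I would close by combining this with the individual-rationality result (the earlier proposition) to conclude that truthful reporting is a weakly dominant strategy and hence that the auction mechanism is incentive compatible.
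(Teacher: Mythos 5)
Your proof is correct and follows essentially the same route as the paper's: the same two-case split on whether the truthful bid wins, with the same sub-case analysis of overbidding versus underbidding, each sub-case resolved by invoking the monotonicity result (Proposition~\ref{monotonic}) and the price-independence result (Proposition~\ref{sameprice}). The only differences are presentational—you make explicit that a winner's utility is always $\alpha_i(S_l)-p_i(S_l)$ regardless of the reported bid and write out the resulting inequalities, details the paper leaves implicit.
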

\begin{proof}
Following a similar procedure in \cite{winningbid}, we consider two cases.

\emph{Case 1.} If cell $i$ wins the bid for coalition $S_l$, it needs to pay $p_i(S_l)$ which is equal to the second highest winning bid. If cell $i$ decides to bid untruthfully with $\lambda'_i(S_l)\geq \lambda_i(S_l)$, cell~$i$ still wins the bid according to Proposition \ref{monotonic} without any increase in utility. On the other hand, if cell $i$ decides to bid untruthfully with $\lambda'_i(S_l)\leq \lambda_i(S_l)$, cell $i$ may still win the bid without any increase in utility according to Proposition \ref{sameprice} or lose the bid, which leads to a reduction in utility.

\emph{Case 2.} If cell $i$ loses the bid for coalition $S_l$, it does not need to pay anything. If cell $i$ bids untruthfully with $\lambda'_i(S_l)\geq \lambda_i(S_l)$, cell~$i$ may win the bid but suffer from a non-positive utility or lose the bid with no change in utility. If cell $i$ bids untruthfully with $\lambda_i(S_l)\leq \lambda_i(S_l)$, it will still lose the bid.
Therefore, the buyer is not incentivized to bid untruthfully since by doing so, it is not able to achieve higher utility.
\end{proof} 

Therefore, the auction mechanism has the properties of individual rationality and incentive compatibility.

\subsection{Complexity of the Joint Auction-Coalition Algorithm}

In this subsection, we investigate the complexity of the joint auction-coalition algorithm. The time complexity of the proposed algorithm depends on the number of attempts of the merge and split operations. Since for each attempt involves the auction-based allocation of which the time complexity increases linearly with the number of cells of workers, the overall time complexity of the proposed algorithm is given by the multiplication of the number of merge and split attempts and the complexity of the auction-based allocation.

For a given partition, the merge operation is carried out if the merging of two coalitions results in a higher total profit than if the two coalitions operate independently. In the worst case scenario, each UAV coalition attempts to merge with all other UAV coalitions. At the start, all UAVs form singleton coalitions, which result in $M$ coalitions. In the worst case, $\frac{M(M-1)}{2}$ number of attempts required before the first merge operation is carried out, $\frac{(M-1)(M-2)}{2}$ number of attempts required before the second merge operation is carried out and so on. The total number of attempts of merge operations is in $O(M^3)$ for the worst case scenario. However, in practice, the merge process requires a significantly lower number of attempts. Once the UAV coalitions merge to form a larger coalition, the number of merge attempts per coalition reduces since the number of merging possibilities for the remaining UAVs decreases.

In the worst case scenario, splitting a UAV coalition $S_l$ involves finding all possible partitions of the UAVs in the coalition. The number of possible partitions of a set of UAVs is given by the Bellman number in Equation (\ref{bellman}), which increases exponentially in the number of UAVs in the set. In practice, the split operation is not performed over the set that consists of all UAVs, but over the smaller set of each formed UAV coalition. Since the UAVs incur higher cooperation cost by forming a larger coalitions in Equation (\ref{cooperatecost}), they prefer to form coalitions of smaller size where possible. As a result, the split operation is affordable in terms of complexity.

\section{Simulation Results and Analysis}
\label{sec:sim}

\begin{figure}
\includegraphics[width=\linewidth]{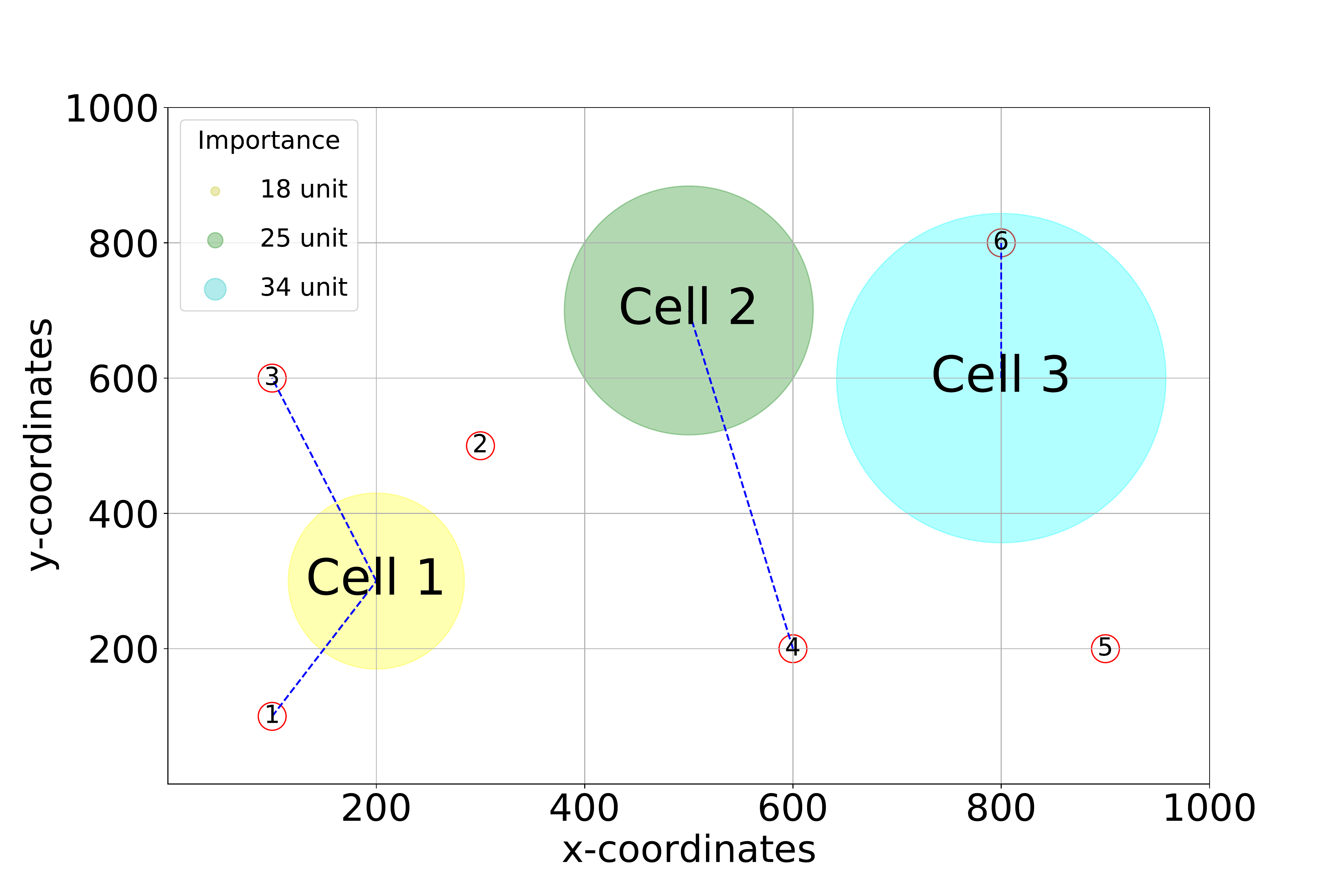}
\caption{Distributed FL Network with 3 cells and 6 UAVs.}
\label{cell}
\end{figure}

In this section, we evaluate the joint auction-coalition formation framework of the UAVs. Firstly, we evaluate communication efficiency in the UAV-enabled network. Then, we analyze the preference of each cell for each UAV, followed by the profit-maximizing behavior of the UAVs and the allocation of UAVs to different cells of workers. Finally, we compared the proposed framework against existing schemes. We consider a distributed FL network on a Cartesian grid of size 1000$\times$1000 as shown in Fig. \ref{cell}. All simulation parameters are summarized in Table~\ref{simulated}.

The sensors of the IoV components such as the vehicles and the RSUs typically have sampling rates that range between 250 samples per second (sps) to 32000 sps. The IoV components, i.e., the FL workers are selected to train the FL model if their importance, $\sigma_{iw}\geq 1$. The training datasets that the FL workers use to train the FL model consist of the samples collected by their sensors. Thus, the FL workers with sensors of higher sampling rates have larger training datasets as compared to the FL workers with sensors of lower sampling rates. Note that since the selected FL workers only transmit the local model parameters to the model owner, the size of the training datasets does not affect the size of the transmitted local model parameters.

\subsection{Communication Efficiency in FL Network}

The introduction of UAVs to the FL network can improve the communication efficiency. According to \cite{chen2019joint}, we set the bandwidth of IoV components, denoted by $b_{iw} = $ 150kHz, the channel gain between FL server and the IoV components, $h_{iw} = $  2dBm, $N_0 =$  -174 dBm/Hz and the range of transmission power, $p_{iw}$ is between 1mW and 10mW. For the UAVs, similar to \cite{uavpower}, we set the transmission power of UAVs, $p_{mt}$, to be between 0.5W and 5W. We also set the bandwidth of the UAVs, $b_m =$ 400kHz and the channel gain of the FL server and the UAVs, $h_m =$ 5dBm. From Fig. \ref{communications}, we can observe that the communication time needed for the UAVs to transmit the local parameters to the FL server is much lower than that required by the IoV components, hence improving the communication efficiency in completing the FL task. Note that even without taking into account of the probability of link failure and the straggler's effect, the communication time required by the IoV components is much larger than that of the UAVs.

\begin{table}[t]
\caption{Simulation Parameters.} 
\label{simulated}
\centering
\begin{tabular}{p{5.5cm} | p{2.2cm} }
\hline \hline
\textbf{Parameter}& \textbf{Values}\\ [0.5ex]
 \hline 
Total Number of Cells, $I$ & 3\\
Total Number of UAVs, $M$ & 6\\
Minimum Threshold Level of Importance, $\zeta$ & 1\\
Sampling Rate of Worker, $s_{iw}$ & 250sps - 32000sps \\
Bandwidth of Worker, $b_{iw}$ \cite{chen2019joint} & 50kHz - 150kHz\\
Transmission Power of Worker, $p_{iw}$ \cite{chen2019joint} & 1mW - 10mW\\
Channel Gain of Worker, $h_{iw}$ & 2dBm - 8dBm\\
Data Size of Worker, $z^{w}$ & 100 MB\\
Data Size of Cell, $z^c$ & 500 MB\\
Flying Energy per Unit Time, $e^f$ &1000mW\\
Flying Velocity of UAV, $v$ & 10m/s\\
Computational Coefficient, $\kappa$ \cite{latency2018} & $10^{-26}$\\
Computational Capability of UAV, $f_{m}$ & $10^{8}$\\
Total Number of CPU Cycles, $a_{m}$ & 1GHz - 3GHz\\
Hovering Energy, $e^h_{m}$ & 5 J\\
Circuit Energy, $e^t_{m}$ & 5 J\\
Bandwidth of UAV, $b_m$ & 200kHz - 400kHz\\
Transmission Power of UAV, $p_{mt}$ \cite{uavpower} & 0.5W - 5W\\
Receiving Power of UAV, $p_{mr}$ & 0.5W - 1W\\
Channel Gain of UAV, $h_m$ & 5dBm - 25dBm\\
Cooperation Cost, $K_m$ & 2\\
Price coefficient of UAV per unit energy, $\delta_m$ & 0.03\\
Data Size of Model Owner, $z_{BS}$ & 1500 MB\\
Transmission Power of Model Owner, $p_{BS}$ \cite{uavpower} & 10W\\
Bandwidth of Model Owner, $b_{BS}$ & 5000kHz \\
Number of FL Iterations, $\mu$ & 20\\

\hline 
\end{tabular}
\end{table}

\begin{figure}[t]
\includegraphics[width=\linewidth]{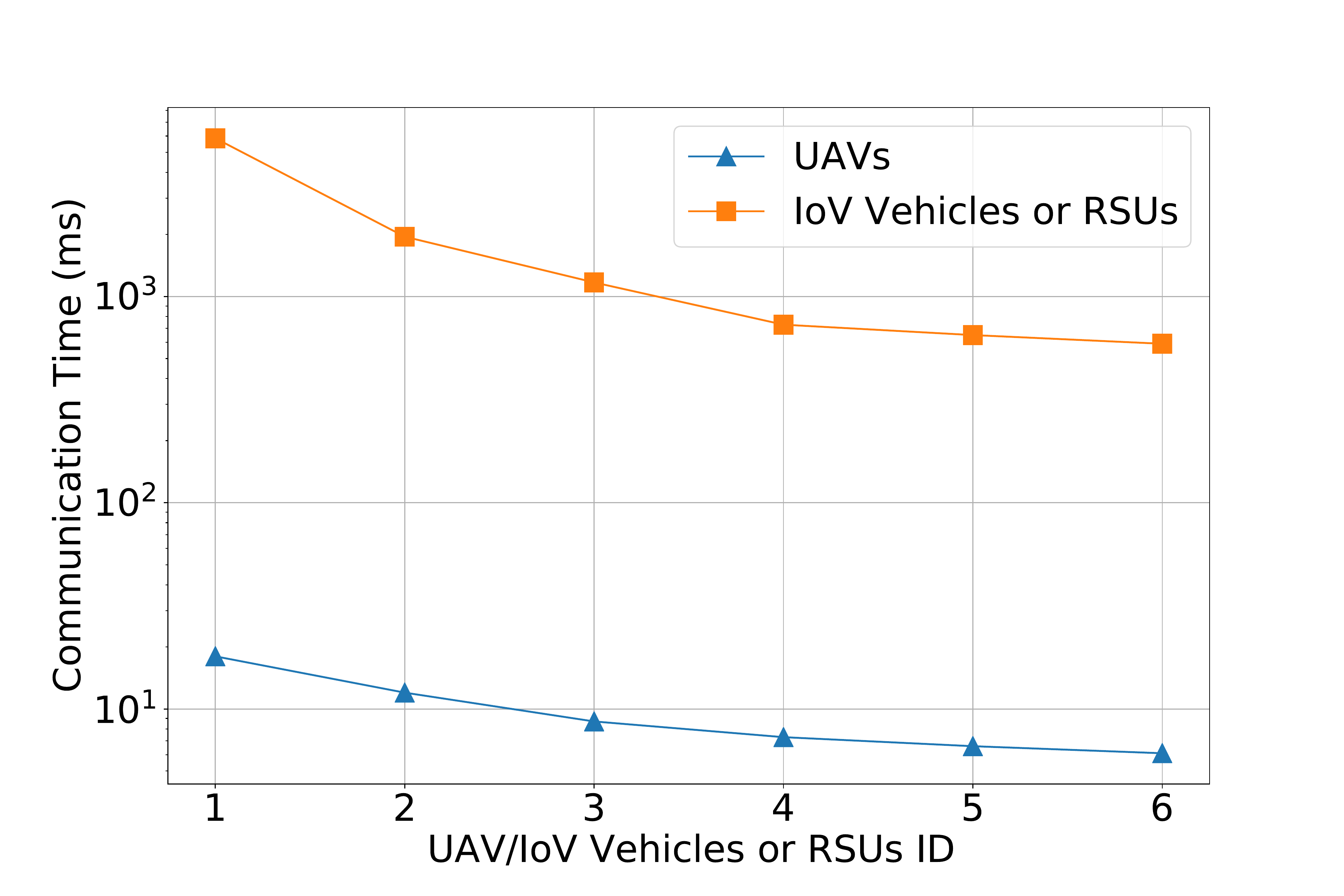}
\caption{Communication Time Needed by UAVs and IoV Components.}
\label{communications}
\end{figure}

\subsection{Preference of Cells of Workers}
\label{subsec:cellpreference}

\begin{table}[t]
\caption{Preference of Cells for Different Coalitions.} 
\centering
\renewcommand\arraystretch{1.5}
\begin{tabular}{c c c c}
\hline\hline
Cell ID& Coordinates & Importance & Top Preference of Cells\\ [0.5ex]
\hline 
Cell 1 & (200, 300) & 18.4& $\{3\},\{1,3\},\{2,3\}$ or $\{1,2,3\}$\\ 
Cell 2 & (500, 700) & 25.2 &$\{6\},\{2,6\}$\\
Cell 3 & (800, 600) & 34.6 &$\{6\}$\\ 
\hline
\end{tabular}
\label{cellpref}
\end{table}

In order to analyze the preference of cells for the UAVs and to determine their valuations in the bidding process, there are three factors that need to be taken into account, i.e., the importance of the cell, the time required by the UAVs to reach the cell, as well as the ability of the UAVs to complete the FL task. We consider 3 cells where Cell 1, Cell 2 and Cell 3 are located at coordinates $(200, 300)$, $(500, 700)$ and $(800, 600)$, respectively, as presented in second column of Table \ref{cellpref}. Column 4 of Table \ref{cellpref} shows the top preference of each cell.

Firstly, the cells submit their bids based on their own importance. The more important they are, the higher they earn from the payment by the model owner, and thus have higher incentive to employ UAVs to complete the FL task. Based on the sampling rate of the vehicles or RSUs, the importance of the vehicles or RSUs to the FL server is also calculated. Cell~3 is the most important cell with an importance value of 34.6 whereas Cell 1 is the least important cell with an importance value of 18.4. We assume that the price that the model owner pays for each unit of importance is the same for all 3 cells, i.e., $q_1=q_2=q_3=3$. Secondly, the valuations of each cell also depend on the time required by the UAVs to reach the cell. Cells of workers prefer UAVs that are nearer as they receive higher payment by the model owner for completing the task in a shorter period of time. We assume that the weight parameters for both cell importance and travelling time for the UAVs to the cells to be equal, i.e., $\theta_1=\theta_2=0.5$. The 6 UAVs are randomly distributed, among which their coordinates are presented in column 2 of Table \ref{uavpref}. We can calculate the Euclidean distance between cells and UAVs based on the coordinates of them. Since we assume that the UAVs travel at constant velocity $v$, the time required by UAV $m$ to reach cell $i$, denoted as $t^m_i$, can be computed by dividing distance over velocity, i.e., $t^m_i=\frac{d^m_i}{v}$. Thirdly, each cell will only submit bids to the UAVs that can complete the FL task, i.e., fulfil the requirement of completing a certain number of iterations as denoted by the model owner. 

\begin{table}[t]
\caption{Preference of UAVs for Different Cells.} 
\centering
\renewcommand\arraystretch{1.5}
\begin{tabular}{c c c c }
\hline\hline
UAV ID& Coordinates & Energy Capacity (Joules)  & Preference\\ 
\hline 
UAV 1 & $(100, 100)$ & 200 & -\\ 
UAV 2 & $(300, 500)$ & 500 & -\\
UAV 3 & $(100, 600)$ &1000 &Cell 3\\ 
UAV 4 & $(600, 200)$ & 1250 & Cell 3\\ 
UAV 5 & $(900, 200)$ & 3000 &Cell 3\\ 
UAV 6 & $(800, 800)$ & 3500 & Cell 3\\ 
\hline
\end{tabular}
\label{uavpref}
\end{table}

\begin{table}[t]
\caption{Preference of Cells for Individual UAVs.} 
\centering
\renewcommand\arraystretch{1.5}
\begin{tabular}{c c c c }
\hline\hline
UAV ID& Cell 1 & Cell 2  & Cell 3\\ [0.5ex]
\hline 
UAV 1 & 0& 0 & 0\\ 
UAV 2 & 0 & 0 & 0\\
UAV 3 & \textbf{43.4}&49.9 & 59.0\\ 
UAV 4 & 39.7 & 47.6 & 63.1\\ 
UAV 5 & 34.7 & 45.6 &64.2\\ 
UAV 6 & 34.0& \textbf{53.6} & \textbf{76.9}\\
\hline
\end{tabular}
\label{uavprefind}
\end{table}

To illustrate the preference analysis of the cells of workers, we first consider the scenario in which each UAV facilitates the FL training in the cells individually, i.e., no coalition is formed yet. Table \ref{uavprefind} illustrates the preference of each cell for each UAV. Cell 1 prefers UAV 3 the most. For Cell 2 and Cell 3, it is clearly seen that both cells prefer UAV 6. This is mainly because UAV 3 is nearest to Cell 1, whereas UAV 6 is nearest to Cell 2 and Cell 3. In addition to that, all cells do not submit any valuation for both UAV 1 and UAV 2 as they cannot individually fulfil the number of iterations required for the FL task. This is because UAV 1 and UAV 2 have low energy capacities (as shown in Table \ref{uavpref}), thereby they are not able to support any cell individually. 

However, the allocation of UAVs to the cells solely based on the valuations submitted by the cells may not benefit the UAVs. Taking Cell 2 as an example. If only a single UAV is considered in supporting a cell, Cell 2 will lose out to Cell~3 in the bidding process for UAV 6 since Cell 3 bids higher for UAV 6. As a result, Cell 2 is left with UAV 3, UAV 4 and UAV 5 to bid for. Among the remaining choices, Cell 2 prefers UAV 3 the most. Thus, Cell 1 will lose out to Cell~2 in bidding for UAV 3 and thus, with UAV 4 as the most preferred choice among the remaining UAVs. It may not be economically efficient for UAV 4 to be allocated to Cell 1 as it is far away from Cell 1. The energy needed for UAV 4 to travel to Cell 1 and back to its original position may offset the profit earned in supporting the FL training process in Cell~1. On the other hand, it may be more cost-effective for UAV 1 and UAV 3 to cooperate to support Cell 1 since the energy needed to reach Cell 1 is smaller.

We then consider coalition formation between two or more UAVs. Previously, when only individual UAVs are considered, UAV 1 and UAV 2 are not considered due to the limited energy capacities in completing the training process individually. However, when coalitions can be formed, UAV 1 and UAV~2 are now considered where either of them forms a coalition with UAV 3 or both of them form a coalition with UAV 3. From the perspective of Cell 1, it is indifferent among $\{3\}$, coalitions $\{1,3\}$, $\{2,3\}$ and $\{1,2,3\}$. The valuations of Cell 1 for these coalitions are the same as the time needed for all the UAVs in the respective coalitions to reach Cell 1 is equal and they are capable of completing the FL task. Note that the time needed for all UAVs in a coalition to reach a cell is determined by the UAV that is farthest from the cell. Similarly, Cell 2 is indifferent between $\{6\}$ and coalition $\{2,6\}$. Cell 3 prefers UAV 6 since it is capable of completing the task individually and it is nearest to Cell 3. 

However, there are two challenges to solve in determining the allocation of UAVs to different cells of workers. Firstly, two cells may prefer the same UAVs. For example, both Cell 2 and Cell 3 prefer UAV 6 but it is not possible for UAV 6 to support both cells at the same time. Secondly, it may not be profitable for some UAVs to support certain cells even if they are preferred by the cells. For example, coalition \{1,2,3\} is equally preferred by Cell 1 as coalitions $\{1,3\}$, $\{1,2\}$ and $\{3\}$. However, it may not be profitable for coalition $\{1,2,3\}$ to support Cell 1 if a smaller coalition of UAVs can complete the task at a lower cost. 

As such, we discuss the profit-maximizing behaviour of the UAVs as follows.

\subsection{Profit-Maximizing Behavior of UAVs}
\label{subsec:profit-max}

To analyze the profit-maximizing behavior of the UAVs, we consider 6 UAVs with different levels of energy capacity, i.e., low, medium and high (as previously presented in third column of Table \ref{uavpref}).

Fig. \ref{iteration} shows the maximum number of iterations given the different energy capacities. It is obvious that the larger the energy capacity of the UAV, the larger the number of iterations that the UAV can support. Individually, UAV 1 and UAV 2 do not choose to support any cell as the possible profit earned in supporting any cell is negative. For UAV 1 and UAV 2, since the valuation bids of all cells are zero, it is natural not to support any cell as they are not earning any revenue. Meanwhile, the profit-maximizing property of the UAVs has resulted in the competition of UAV 3, UAV 4, UAV 5 and UAV 6 to facilitate FL training in the same cell, i.e., Cell 3. However, by allocating UAV 3, UAV 4, UAV 5 and UAV 6 to Cell 3, the revenue earned does not increase but the cost increases which contribute to the overall smaller profit earned. Thus, it is possible to allocate one of the UAVs to another cell such that the overall profit of the UAVs is higher, which is discussed in the next subsection.

\begin{table}[h]
\caption{Revenue, Cost of Profit for Different Coalitions in Each Cell.} 
\centering
\renewcommand\arraystretch{1.5}
\begin{tabular}{|p{1cm}|p{0.35cm}p{0.35cm}p{0.6cm}|p{0.35cm}p{0.35cm}p{0.35cm}|p{0.35cm}p{0.35cm}p{0.35cm}|}
\hline

\multirow{2}{*}{Coalition} & \multicolumn{3}{c|}{Cell 1} & \multicolumn{3}{c|}{Cell 2} & \multicolumn{3}{c|}{Cell 3} \\
       \cline{2-10} 
    & Rev.   & Cost  & Prof. & Rev. & Cost & Prof.  & Rev. & Cost & Prof.     \\

    \hline\hline
$\{3\}$       & 43.4 & 23.5 & 19.9  & 49.9  & 16.2  & 33.7  & 59.0  & 21.5  & 37.5  \\
$\{1,3\}$ & 43.4 & 17.9 & \textbf{25.5}  & 44.7  & 22.4  & 22.4  & 57.7  & 23.1  & 34.6  \\
$\{2,3\}$       & 43.4 & 32.9 & 10.5 & 49.9  & 32.7  & 17.2   & 59.0  & 31.5  & 27.5  \\
$\{1,2,3\}$   & 43.4 & 26.8 & 16.6  & 44.7  & 40.0& 4.70  & 57.1  & 38.4  & 19.3\\

$\{6\}$ & 34.0 & 53.8 & -19.8 & 53.6  & 33.8&  19.8 & 76.9  & 34.8& \textbf{42.1} \\
$\{2,6\}$   & 34.0 &  34.3   & -0.33  & 53.6 &  32.4 & 21.2  & 61.7  & 40.3& 21.4  \\
\hline

\end{tabular}
\label{profit}
\end{table}

Given the varied energy cost of the UAVs and different revenue from different cells, each UAV coalition earns different profits for supporting different cells of workers. From Table~\ref{profit}, we observe that although the valuations for coalitions~$\{3\}$, $\{1,3\}$, $\{2,3\}$ and $\{1,2,3\}$ by Cell 1 are the same, the costs are different. As a result, the profit of coalition~$\{1,3\}$ is the highest if it is allocated to Cell 1. The addition of UAV 2 to the coalition only incurs extra cost, which leads to a lower profit. Hence, it is not economically viable for coalition $\{1,2,3\}$ to be allocated to Cell 1. Similarly, for Cell~2, coalitions $\{2,6\}$ and $\{6\}$ have same valuation. It costs lower for Cell 2 to be supported by a coalition of UAV 2 and UAV 6 instead of UAV 6 individually. 

With the different revenue and cost structures of the UAVs, we next discuss the auction-based UAV-cell allocation.

\begin{figure}
\includegraphics[width=\linewidth]{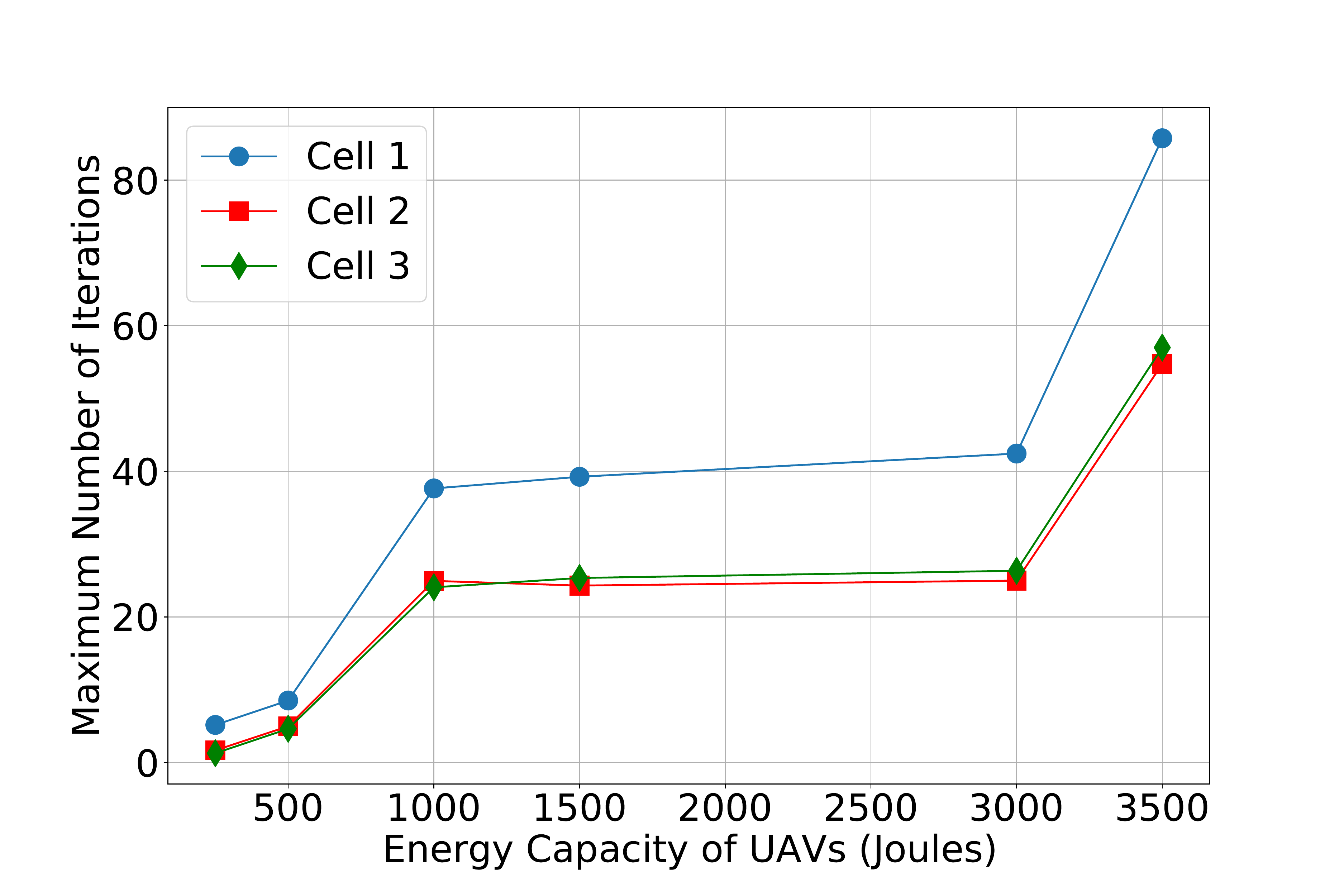}
\caption{Maximum Number of Iterations under Different Energy Capacities.}
\label{iteration}
\end{figure}

\subsection{Allocation of UAVs to Cells of Workers}
Given the valuations of cells for different coalitions of UAVs through the auction, the UAVs as the auctioneers need to decide on the allocation of UAVs to the different cells of workers such that the total profit earned by all UAVs is maximized and the preferences of the cells of workers are taken into account.

\begin{figure}
\centering
\includegraphics[width=0.8\linewidth]{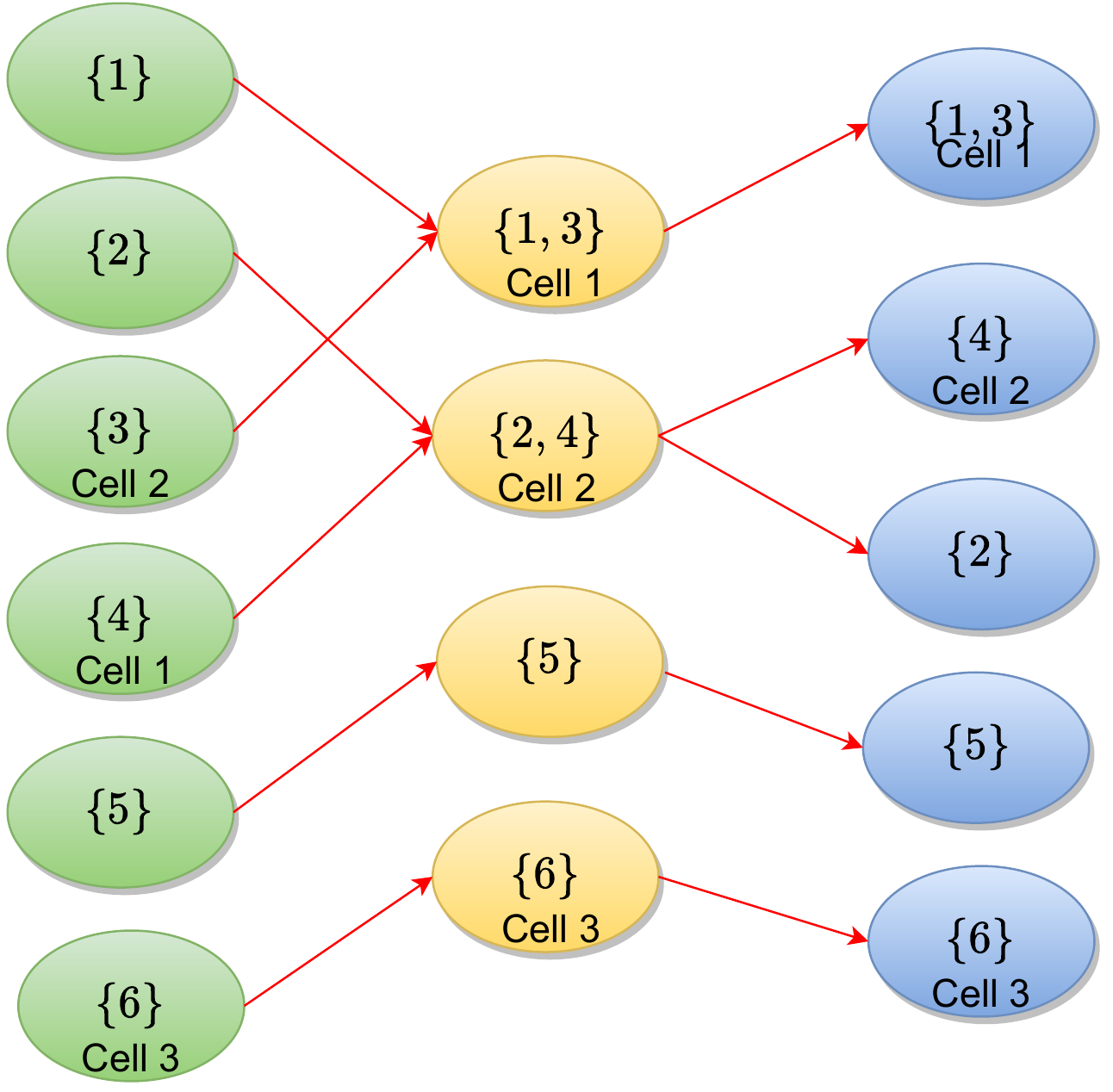}
\caption{Illustration of Merge-and-Split Mechanism and Allocation of UAVs to Cells of Workers.}
\label{merge}
\end{figure}


The merge-and-split algorithm is used to determine the partition of the UAVs that maximizes the total profit. Fig.~\ref{merge} demonstrates the merge-and-split mechanism that determines the partition that increases the total profit earned. When each UAV is considered individually where no coalition is formed, the maximum total attainable profit is achieved when UAV 4 is allocated to Cell 1, UAV~3 is allocated to Cell~2 while UAV~6 is allocated to Cell~3. UAV~1, UAV~2 and UAV~5 are not allocated to any cell. The maximum total attainable profit is increased when the partition changes to $\{\{1,3\},\{2,4\},\{5\},\{6\}\}$ through a merge mechanism, where UAV 1 and UAV 3 are allocated to Cell 1, UAV 2 and UAV 4 are allocated to Cell 2 and UAV 6 is allocated to Cell 3. Then, the split mechanism results in a change in partition again to  $\{\{1,3\},\{2\},\{4\},\{5\},\{6\}\}$, where coalition $\{1,3\}$ is allocated to Cell 1, UAV 4 is allocated to Cell 2 and UAV~6 is allocated to Cell 3. A change in partition will only occur when it results in an increase in total profit earned given the optimal allocation of UAVs to cells.

Although UAV 5 is positively valued by all 3 cells of workers, it is not allocated to any cell. From the perspective of Cell~1, coalition $\{1,3\}$ can support the FL task at a lower cost than that of UAV~5. Similarly, the costs of UAV~4 and UAV 6, if they are allocated to Cell 2 and Cell 3 respectively, are lower than that if any of the cell is supported by UAV~5. Moreover, UAV 2 is also not allocated to any cell. As mentioned in Section \ref{subsec:profit-max}, the profit decreases when UAV 2 joins coalition~$\{1,3\}$ in supporting Cell 1, hence it is more profitable for coalition $\{1,3\}$ to be allocated to Cell 1, instead of coalition $\{1,2,3\}$. UAV 2 also does not join UAV~4 in supporting Cell~2. Although UAV~2 is nearer to Cell 2, it is not capable of completing the FL task individually. The valuations of Cell 2 for UAV 4 and coalition $\{2,4\}$ are the same. However, it is more cost-effective for Cell 2 to be supported only by UAV 4 instead of both UAV 2 and UAV 4. As such, we see that the grand coalition, where all UAVs join a single coalition, is not stable, thus validating our proof in Section \ref{subsec:coalitionalgorithm}. In fact, by forming the grand coalition, it can only be allocated to Cell~3 with a profit of 2.12 as the UAVs will not earn positive profit if they are allocated to either Cell~1 or Cell~2.

\begin{figure}
\centering
\includegraphics[width=\linewidth]{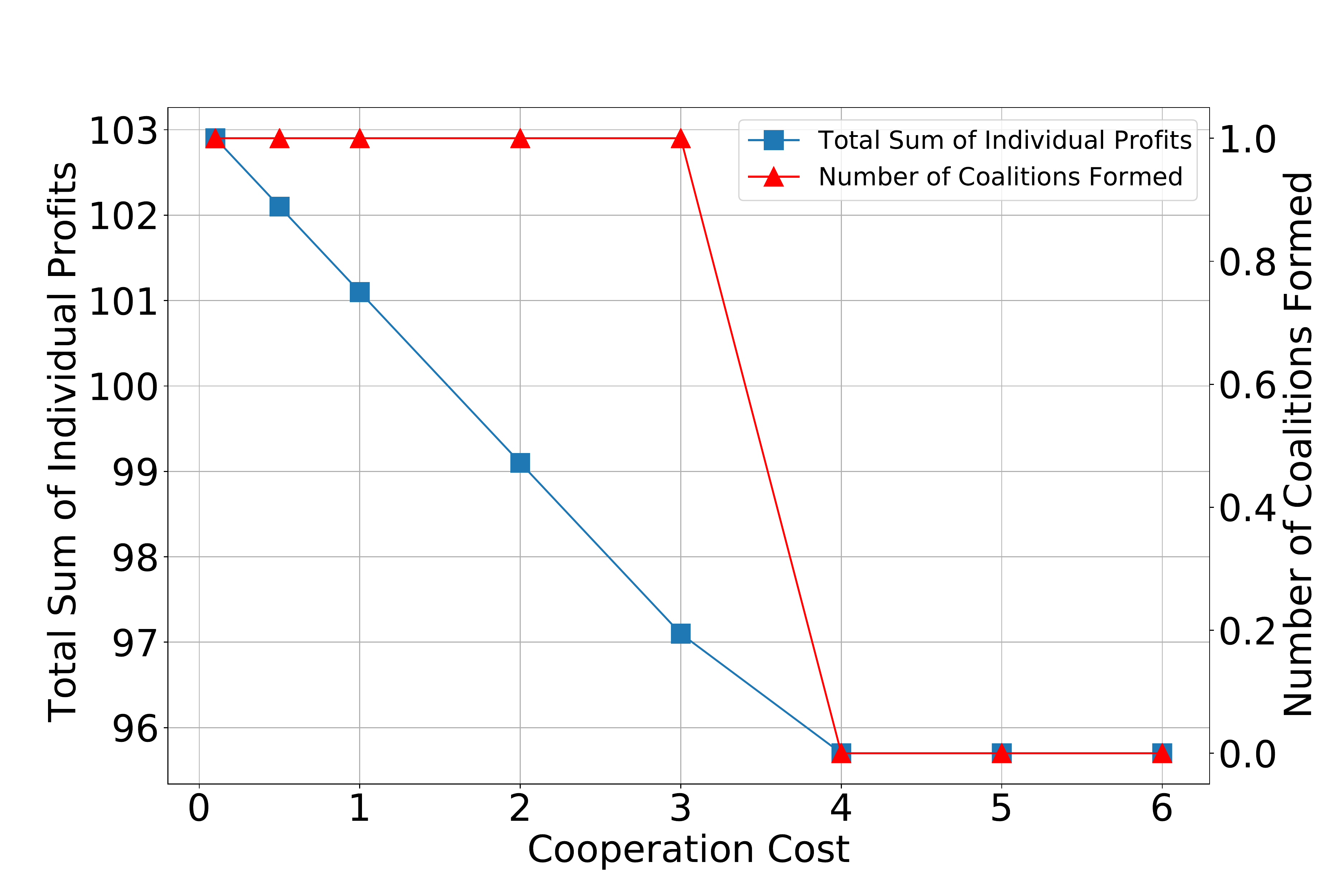}
\caption{Total Profit and Number of Coalitions vs Cooperation Cost.}
\label{cooperation}
\end{figure}

Fig. \ref{cooperation} shows that the total profit of the UAVs decreases as the cooperation cost among the UAVs increases. When the cooperation cost is more than or equal to 4, the UAVs prefer to support the FL training process in the cells individually and not to form any coalition where UAV~3, UAV~4 and UAV~6 are allocated to Cell~2, Cell~1 and Cell~6 respectively. In other words, it is not cost-effective for the UAVs to form coalition anymore. By not forming any coalition, the UAVs are able to earn a profit of 95.7, which is higher than that if the UAVs decide to form coalitions when the cooperation cost is more than or equal to 4.
\begin{figure}
\centering
\includegraphics[width=\linewidth]{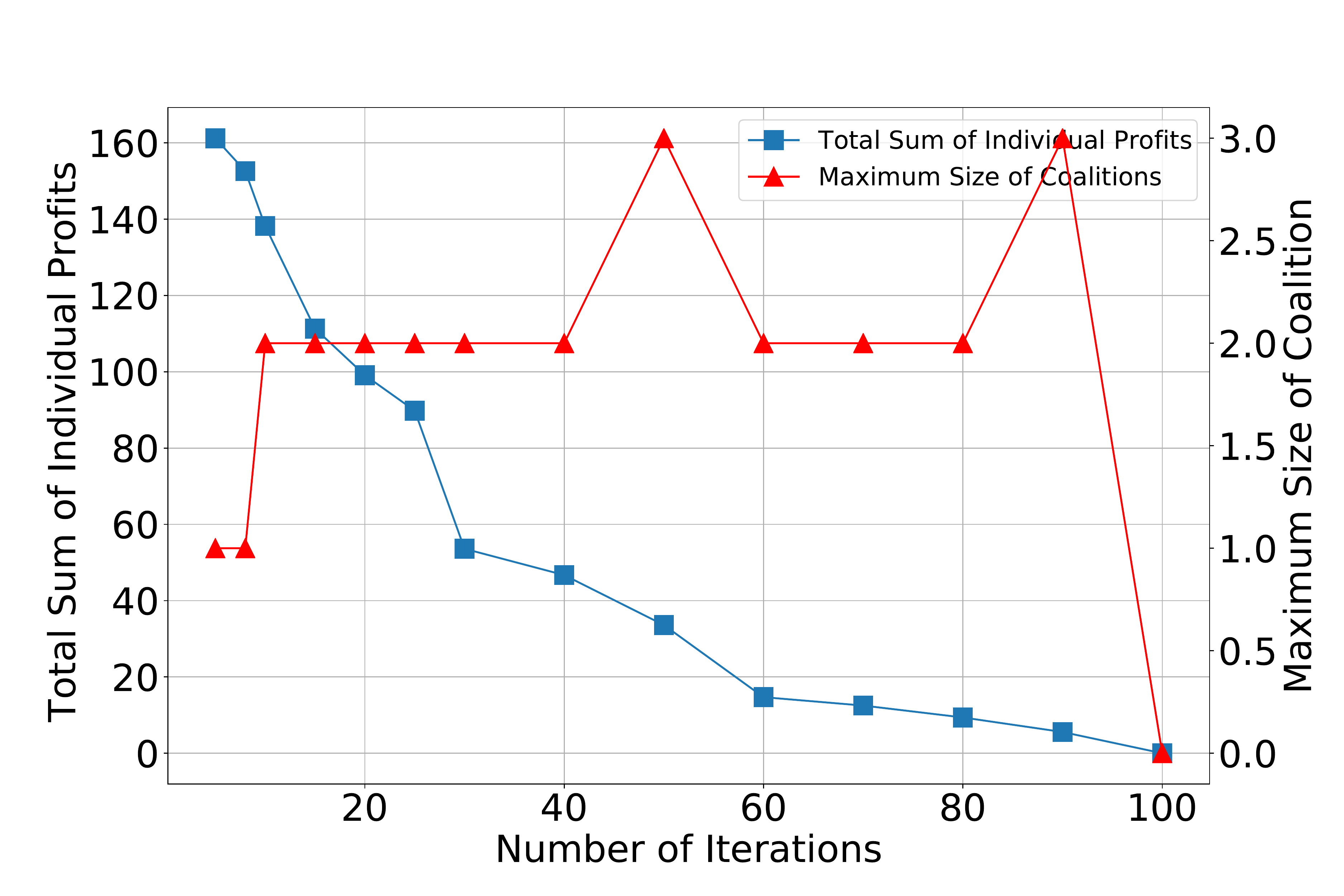}
\caption{Total Profit and Size of Coalitions vs Number of Iterations.}
\label{numiterate}
\end{figure}

Furthermore, the total profit earned by the UAVs is also affected by the number of iterations announced by the model owner. As the number of iterations required to be completed by the UAVs increases, the total profit earned by the UAVs decreases as seen in Fig.~\ref{numiterate}. Clearly, since the payment price of the cells of workers does not change and the UAVs incur more energy to facilitate the FL task, the total profit earned decreases. From Fig.~\ref{numiterate}, we observe that the maximum size of coalitions formed is 3 when the number of iterations required is 50. This implies that by forming a coalition of size 3 and be allocated to a cell of workers, the UAV coalition is still able to earn a positive profit. When the number of iterations is 60 to 80, only coalition $\{1,6\}$ is allocated to Cell~1 where the profit decreases as the number of iterations increases. The rest of the UAVs are allocated to neither Cell~2 nor Cell~3 as the profits earned by any possible coalitions from supporting any of these cells are negative. When the number of iterations required by the FL task is 90, the coalition $\{1,3,6\}$ is allocated to Cell~1. The UAVs prefer not to support any cell of workers, and thus there is no need to form a coalition when the number of iterations is 100. Therefore, a coalition with size of larger than 3 does not form due to two reasons. Firstly, it is possible to form a smaller coalition to support a cell of workers such that there is no need for a larger UAV coalition. Secondly, the cost incurred by a larger UAV coalition is larger than the revenue gained such that the UAVs are better off not supporting any cell of workers. 

Next, we compare the performance of the proposed joint auction-coalition formation framework against the existing schemes.

\begin{figure}
\centering
\includegraphics[width=\linewidth]{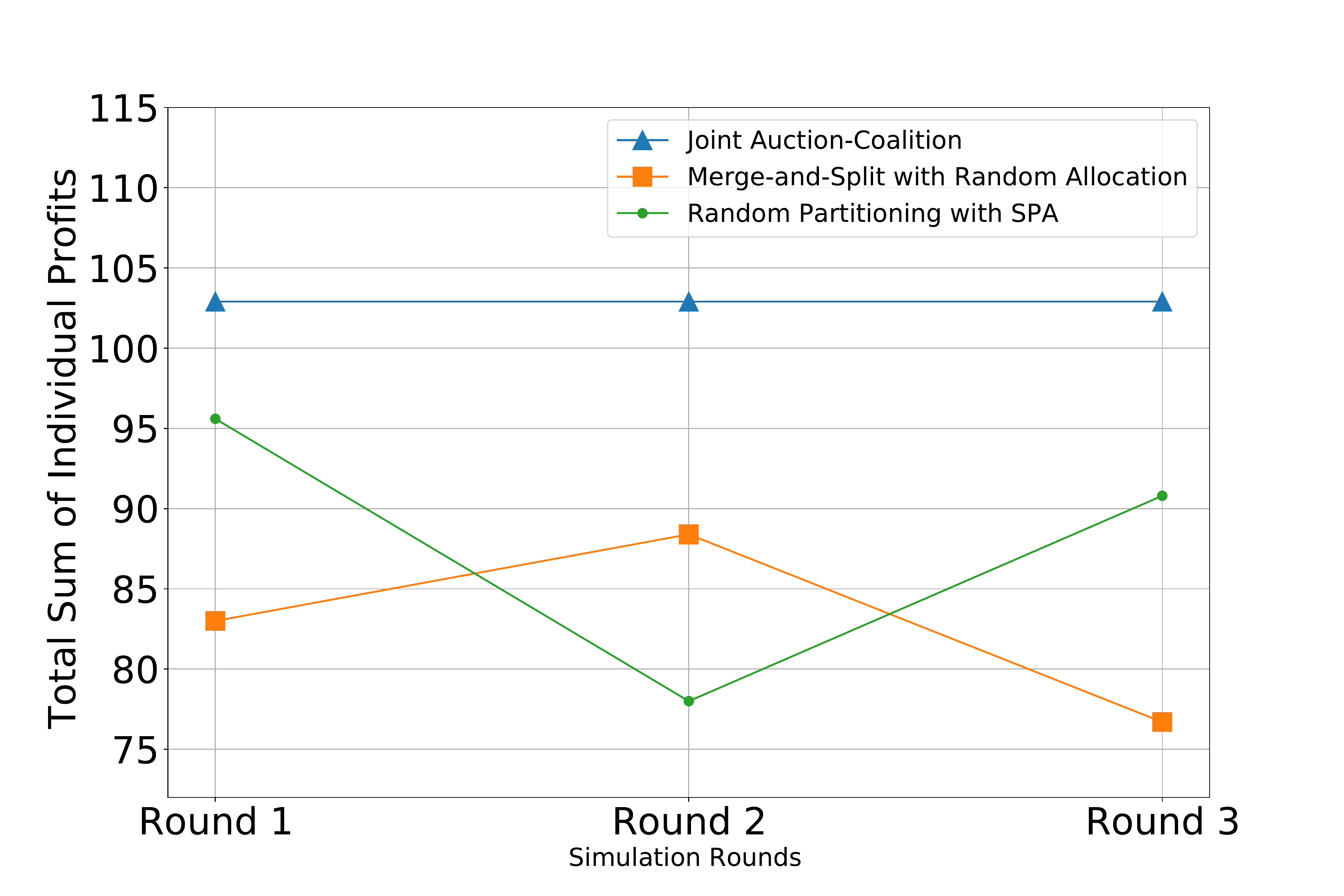}
\caption{Comparison with Existing Schemes.}
\label{comparison}
\end{figure}

\subsection{Comparison with Existing Schemes}
For comparison, we have chosen two existing schemes: (i) the merge-and-split algorithm to determine the formations of the UAV coalitions with random allocation to the IoV groups, and (ii) random partitioning of the UAV coalitions with second-price auction. We have run three rounds of simulations and the performances of the three schemes are shown in Fig.~\ref{comparison}.

The joint auction-coalition formation framework achieves the highest profit in all three rounds of simulations. Due to the randomness in either the partitioning or the allocation of the UAV coalitions, the UAVs are not able to maximize their profits. On one hand, by adopting the merge-and-split framework to determine the formations of the UAV coalitions and randomly allocating the UAV coalitions to the IoV groups, the profits of the UAV coalitions are not maximized as the UAV coalitions are not allocated to the IoV groups that value them most. On the other hand, by randomly deciding on the partitioning of the UAV coalitions and allocating them based on the second-price auction, the potential of the UAV coalitions to earn higher profits is not exploited, which is achievable by considering other forms of partitioning. 

\section{Conclusion}
\label{sec:conc}

In this paper, we have proposed a joint auction-coalition formation framework of UAVs to facilitate resource-constrained IoV components in completing the FL tasks. Firstly, we design an auction scheme where each cell of workers submit its bids to all possible coalitions of UAVs based on their importance of the cells and the distance between the cells and the UAVs. Then, we use the merge-and-split algorithm to decide on the optimal coalitional structure that maximizes the total profit of the UAVs. 

For our future work, we can consider more factors in determining the coalition formation of UAVs. For example, we can take into account the social properties of the UAVs when they form coalitions such that malicious UAVs can be identified and eliminated so that the FL performance is not adversely affected. Furthermore, we can consider a joint trajectory optimization and energy efficient coalition formation game in order to complete the FL tasks more efficiently. 

\bibliographystyle{ieeetr}
\bibliography{UAV-revision}

\begin{IEEEbiography}[{\includegraphics[width=1in,height=1.25in,clip,keepaspectratio]{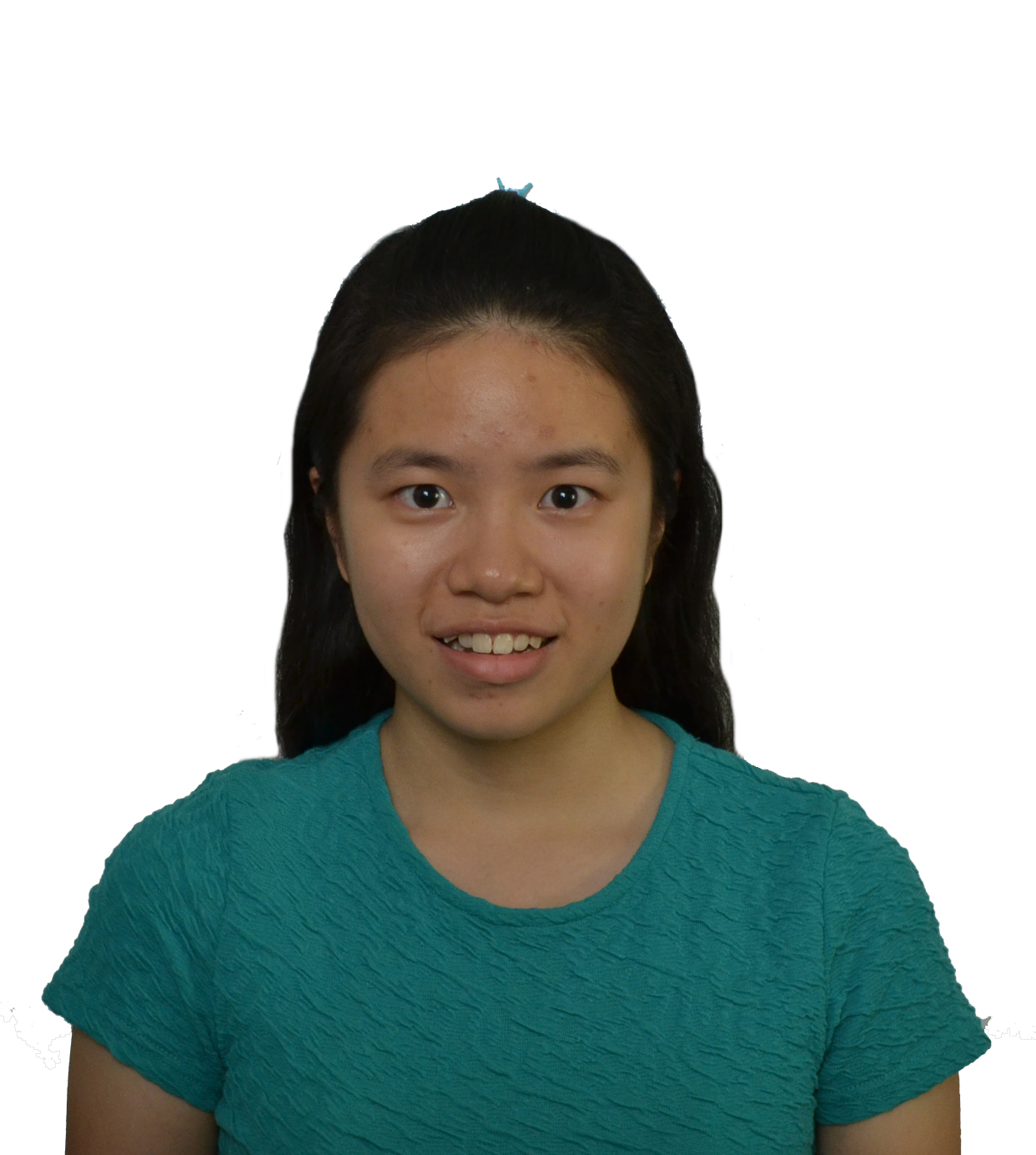}}]{Jer Shyuang Ng} graduated with Double (Honours) Degree in Electrical Engineering (Highest Distinction) and Economics from National University of Singapore (NUS) in 2019. She is currently an Alibaba PhD candidate with the Alibaba Group and Alibaba-NTU Joint Research Institute, Nanyang Technological University (NTU), Singapore. Her research interests include incentive mechanisms and edge computing.
\end{IEEEbiography}

\vspace{-1.5cm}

\begin{IEEEbiography}[{\includegraphics[width=1in,height=1.25in,clip,keepaspectratio]{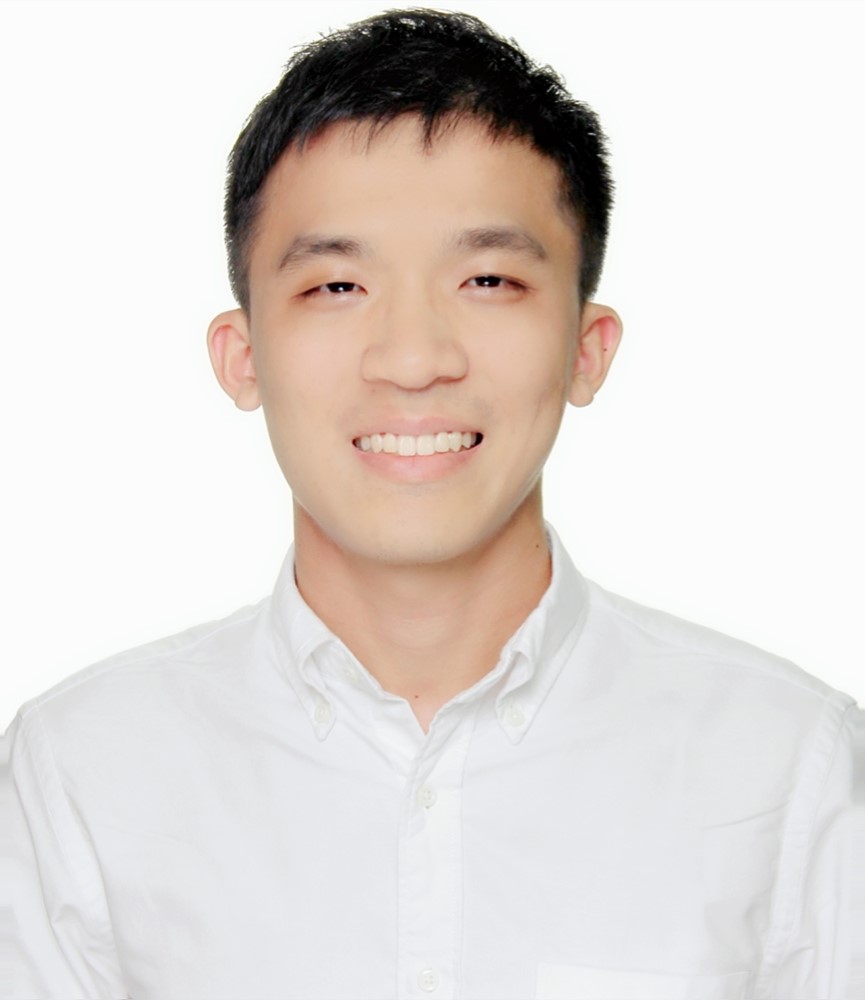}}]{Wei Yang Bryan Lim} graduated with double First Class Honours in Economics and Business Administration (Finance) from the National University of Singapore (NUS) in 2018. He is currently an Alibaba PhD candidate with the Alibaba Group and Alibaba-NTU Joint Research Institute, Nanyang Technological University, Singapore. His research interests include Federated Learning and Edge Intelligence.
\end{IEEEbiography}

\vspace{-1.5cm}

\begin{IEEEbiography}[{\includegraphics[width=1in,height=1.25in,clip,keepaspectratio]{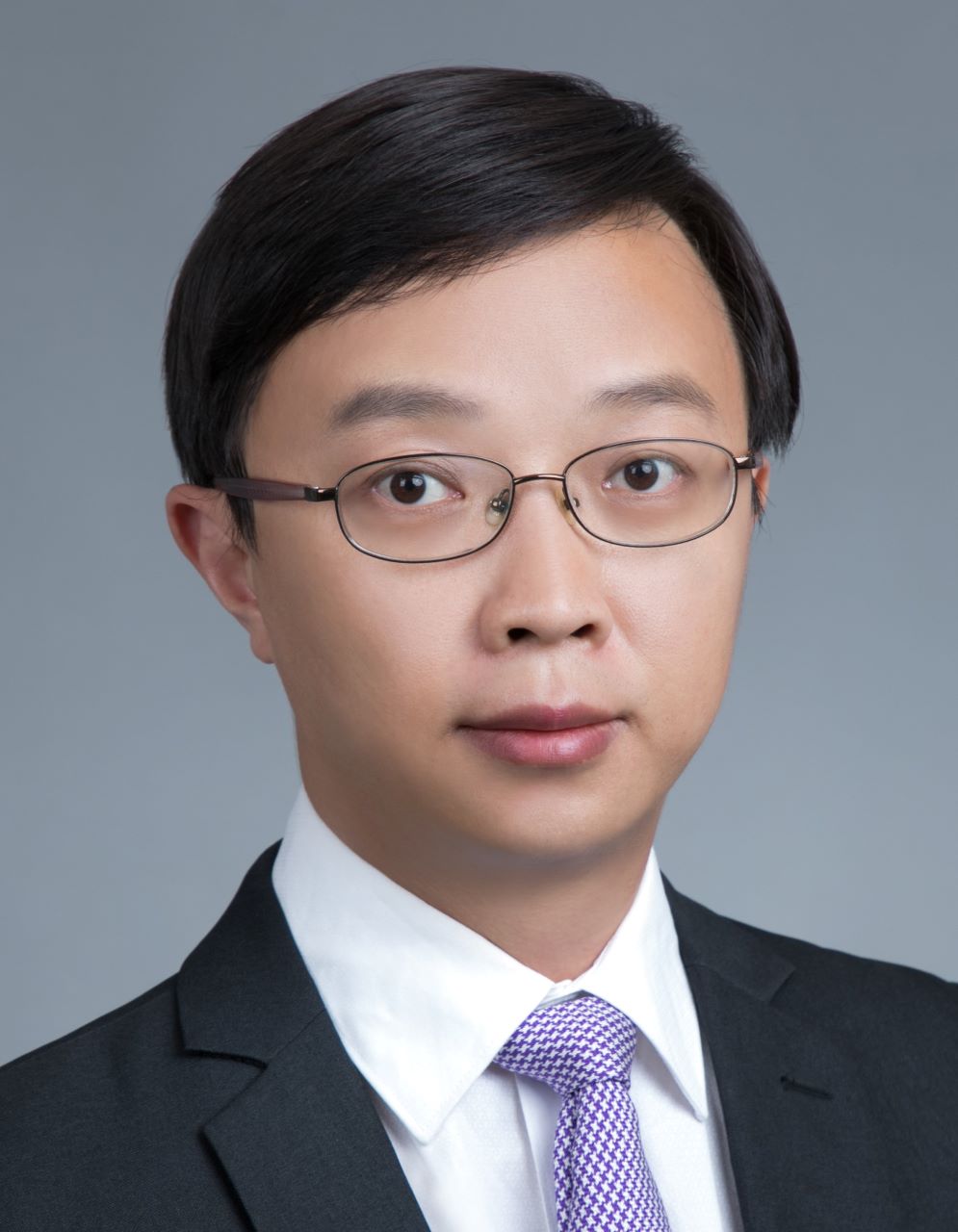}}]{Hong-Ning Dai} (Senior Member, IEEE) is currently with Faculty of Information Technology at Macau University of Science and Technology as an associate professor. He obtained the Ph.D. degree in Computer Science and Engineering from Department of Computer Science and Engineering at the Chinese University of Hong Kong. His current research interests include Internet of Things and blockchain technology. He has served as editors for Computer Communications (Elsevier), Connection Science (Taylor \& Francis),  IEEE Access, guest editors for IEEE Transactions on Industrial Informatics, IEEE Transactions on Emerging Topics in Computing. He is a senior member of the Institute of Electrical and Electronics Engineers (IEEE). 
\end{IEEEbiography}

\vspace{-1.5cm}

\begin{IEEEbiography}[{\includegraphics[width=1in,height=1.25in,clip,keepaspectratio]{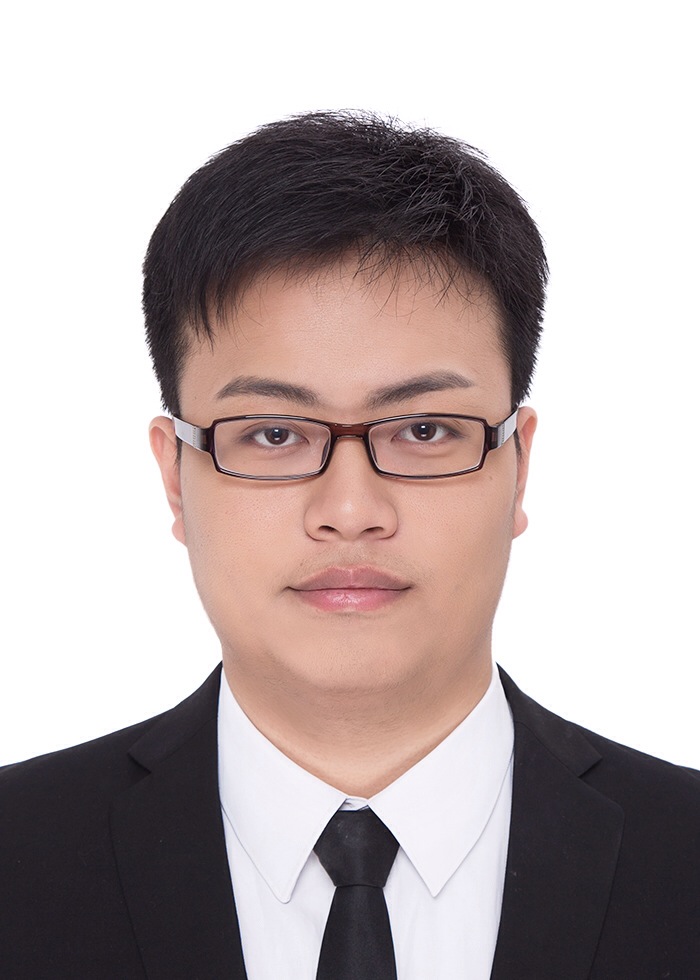}}]{Zehui Xiong}(S'17) received his B.Eng degree with the highest honors in Telecommunication Engineering from Huazhong University of Science and Technology, Wuhan, China, in Jul 2016. From Aug 2016 to Oct 2019, he pursued the Ph.D. degree in the School of Computer Science and Engineering, Nanyang Technological University, Singapore. Since Nov 2019, he has been with Alibaba-NTU Singapore Joint Research Institute. He was a visiting scholar with Department of Electrical Engineering at Princeton University from Jul to Aug 2019. He was also a visiting scholar with BBCR lab in Department of Electrical and Computer Engineering at University of Waterloo from Dec 2019 to Jan 2020. His research interests include resource allocation in wireless communications, network games and economics, blockchain, and edge intelligence.
\end{IEEEbiography}

\vspace{-1.5cm}

\begin{IEEEbiography}[{\includegraphics[width=1in,height=1.25in,clip,keepaspectratio]{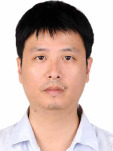}}]{Jianqiang Huang} is currently the Director of the Alibaba DAMO Academy. His research interests focus on visual intelligence in the city brain project of Alibaba. He received the second prize of National Science and Technology Progress Award in 2010.
\end{IEEEbiography}

\vspace{-1.5cm}

\begin{IEEEbiography}[{\includegraphics[width=1in,height=1.25in,clip,keepaspectratio]{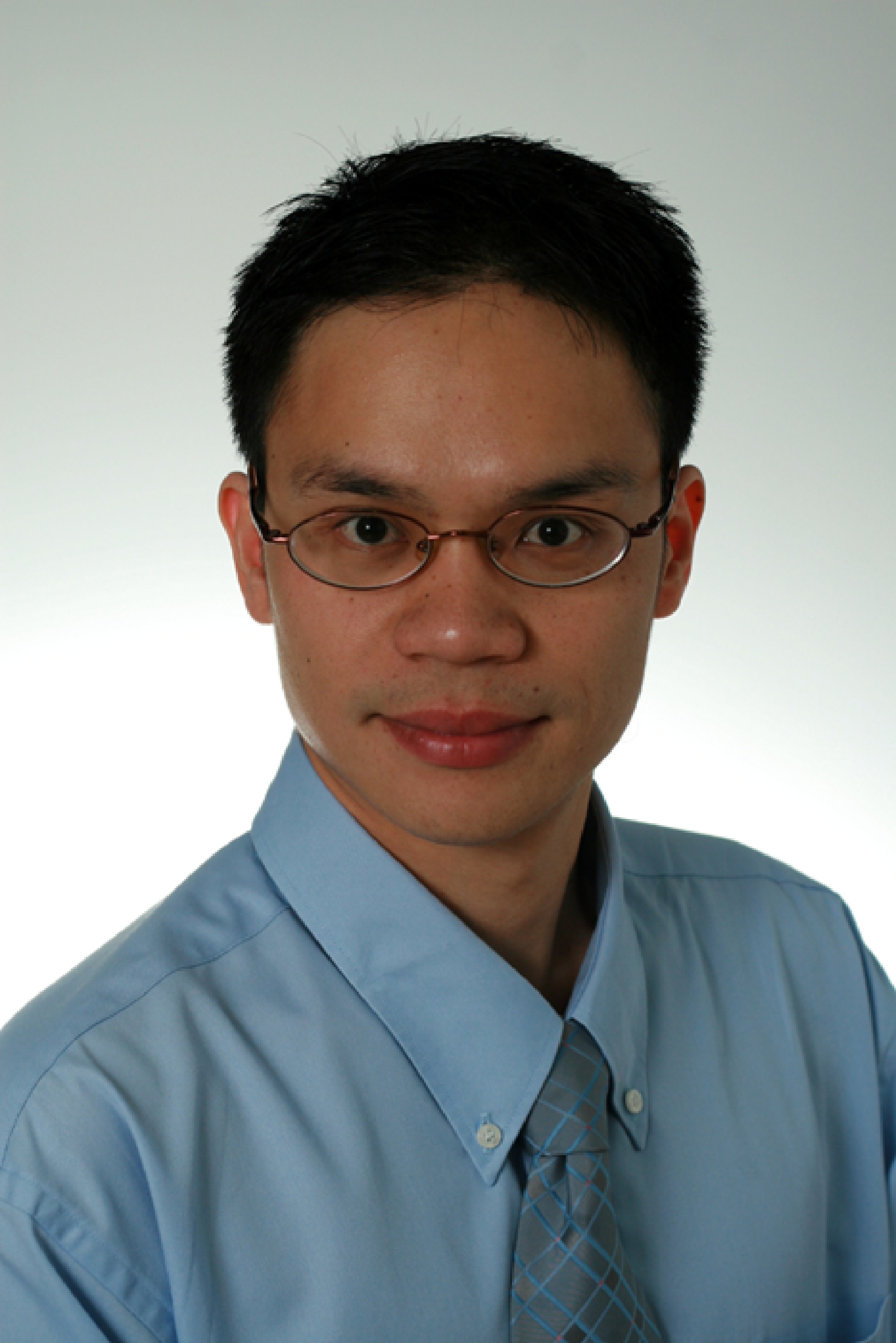}}]{Dusit Niyato} (M'09-SM'15-F'17) is currently a professor in the School of Computer Science and Engineering and, by courtesy, School of Physical \& Mathematical Sciences, at the Nanyang Technological University, Singapore. He received B.E. from King Mongkuk’s Institute of Technology Ladkrabang (KMITL), Thailand in 1999 and Ph.D. in Electrical and Computer Engineering from the University of Manitoba, Canada in 2008. He has published more than 380 technical papers in the area of wireless and mobile networking, and is an inventor of four US and German patents. He has authored four books including “Game Theory in Wireless and Communication Networks: Theory, Models, and Applications” with Cambridge University Press. He won the Best Young Researcher Award of IEEE Communications Society (ComSoc) Asia Pacific (AP) and The 2011 IEEE Communications Society Fred W. Ellersick Prize Paper Award. Currently, he is serving as a senior editor of IEEE Wireless Communications Letter, an area editor of IEEE Transactions on Wireless Communications (Radio Management and Multiple Access), an area editor of IEEE Communications Surveys and Tutorials (Network and Service Management and Green Communication), an editor of IEEE Transactions on Communications, an associate editor of IEEE Transactions on Mobile Computing, IEEE Transactions on Vehicular Technology, and IEEE Transactions on Cognitive Communications and Networking. He was a guest editor of IEEE Journal on Selected Areas on Communications. He was a Distinguished Lecturer of the IEEE Communications Society for 2016-2017. He was named the 2017, 2018, 2019 highly cited researcher in computer science. He is a Fellow of IEEE.
\end{IEEEbiography}

\vspace{-1.5cm}

\begin{IEEEbiography}[{\includegraphics[width=1in,height=1.25in,clip,keepaspectratio]{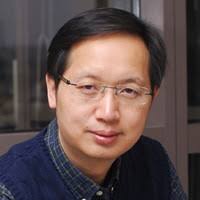}}]{Xiansheng Hua} (Fellow, IEEE) received the B.S. and Ph.D. degrees in applied mathematics from Peking University, Beijing, China, in 1996 and 2001, respectively. He joined Microsoft Research Asia, Beijing, in 2001, as a Researcher. He was a Principal Research and a Development Lead in multimedia search with Microsoft Search Engine, Bing, Red- mond, WA, USA, from 2011 to 2013. He was a Senior Researcher with Microsoft Research Red- mond, Redmond, from 2013 to 2015. He became a Researcher and the Senior Director of Alibaba
Group, Hangzhou, China, in 2015, where he is also leading the Search Division, Visual Computing Team, Alibaba Cloud, and DAMO Academy. He is currently a Distinguished Engineer/Vice President of Alibaba Group, where he is also leading a team working on large-scale visual intelligence on cloud. He has authored or coauthored more than 200 research articles and has filed more than 90 patents. His research interests include big multimedia data search, advertising, understanding and mining, pattern recognition, and machine learning. He is also an ACM Distinguished Scientist. He was a recipient of the 2008 MIT Technology Review TR35 Young Innovator Award for his outstanding contributions on video search. He was also a recipient of the Best Paper Award from ACM Multimedia 2007 and the Best Paper Award of the IEEE TRANSACTIONS ON CIRCUITS AND SYSTEMS FOR VIDEO TECHNOLOGY in 2014. He has served as the Program Co-Chair for the IEEE ICME 2012, ACM Multimedia 2012, and the IEEE ICME 2013. He will be serving as the General Co-Chair for ACM Multimedia in 2020.
\end{IEEEbiography}
\vspace{-1.5cm}

\begin{IEEEbiography}[{\includegraphics[width=1in,height=1.25in,clip,keepaspectratio]{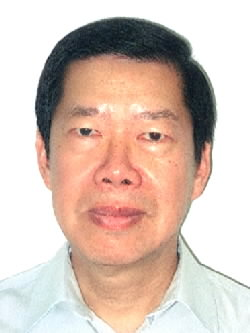}}]{Cyril Leung} received the B.Sc. (First Class Hons.) degree from Imperial College, University of London, U.K., and the M.S. and Ph.D. degrees in electrical engineering from Stanford University. He has been an Assistant Professor with the Department of Electrical Engineering and Computer Science, Massachusetts Institute of Technology, and the Department of Systems Engineering and Computing Science, Carleton University. Since 1980, he has been with the Department of Electrical and Computer Engineering, University of British Columbia (UBC), Vancouver, Canada, where he is a Professor and currently holds the PMC-Sierra Professorship in Networking and Communications. He served as an Associate Dean of Research and Graduate Studies with the Faculty of Applied Science, UBC, from 2008 to 2011. His research interests include wireless communication systems, data security and technologies to support ageless aging for the elderly. He is a member of the Association of Professional Engineers and Geoscientists of British Columbia, Canada.
\end{IEEEbiography}

\vspace{-1.5cm}

\begin{IEEEbiography}[{\includegraphics[width=1in,height=1.25in,clip,keepaspectratio]{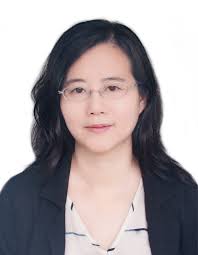}}]{Chunyan Miao} received the BS degree from Shandong University, Jinan, China, in 1988, and the MS and PhD degrees from Nanyang Technological University, Singapore, in 1998 and 2003, respectively. She is currently a professor in the School of Computer Science and Engineering, Nanyang Technological University (NTU), and the director of the Joint NTU-UBC Research Centre of Excellence in Active Living for the Elderly (LILY). Her research focus on infusing intelligent agents into interactive new media (virtual, mixed, mobile, and pervasive media) to create novel experiences and dimensions in game design, interactive narrative, and other real world agent systems.
\end{IEEEbiography}

\end{document}